\newcommand{\vertiii}[1]{{\left\vert\kern-0.25ex\left\vert\kern-0.25ex\left\vert #1 
		\right\vert\kern-0.25ex\right\vert\kern-0.25ex\right\vert}}
  \renewcommand{\Pr}{\mbox{\rm Pr}}
  \newcommand{\Exp}{{\mathbb{E}}}
  \DeclareMathOperator{\cb}{cb}
  \renewcommand{\cb}{{\mathrm{cb}}}
  \DeclareMathOperator{\jcb}{jcb}
  \newcommand{\norm}[1]{\|#1\|}
  \newcommand{\R}{\mathbb{R}} 
  \newcommand{\C}{\mathbb{C}} 
  \newcommand{\N}{\mathbb{N}} 
  \newcommand{\Z}{\mathbb{Z}} 
  \newcommand{\pmset}[1]{\{-1,1\}^{#1}} 
  \newcommand{\bset}[1]{\{0,1\}^{#1}} 
  \newcommand{\st}{\,\mid\,} 
  \newcommand{\eg}{{e.g.}} 
  \newcommand{\eps}{\varepsilon}
  \newcommand{\id}{I}
  \newcommand{\A}{\mathcal A}
  \DeclareMathOperator{\supp}{supp}
  \newcommand{\err}{\mathcal E}
  \newcommand{\ind}[1]{\mathbf{#1}}
  \newcommand{\beq}{\begin{equation}}
  \newcommand{\eeq}{\end{equation}}
  \newcommand{\beqn}{\begin{equation*}}
  \newcommand{\eeqn}{\end{equation*}}
  \newcommand{\beqr}{\begin{eqnarray}}
  \newcommand{\eeqr}{\end{eqnarray}}
  \newcommand{\beqrn}{\begin{eqnarray*}}
  \newcommand{\eeqrn}{\end{eqnarray*}}
  \newcommand{\bmline}{\begin{multline}}
  \newcommand{\emline}{\end{multline}}
  \newcommand{\bmlinen}{\begin{multline*}}
  \newcommand{\emlinen}{\end{multline*}}
  \theoremstyle{plain}
  \newtheorem{theorem}{Theorem}[section]
  \newtheorem{lemma}[theorem]{Lemma}
  \newtheorem{proposition}[theorem]{Proposition}
  \newtheorem{corollary}[theorem]{Corollary}
  \newtheorem{question}[theorem]{Question}
  \theoremstyle{definition}
  \newtheorem{definition}[theorem]{Definition}
  \newtheorem{example}[theorem]{Example}
  \theoremstyle{remark}
  \newtheorem{remark}[theorem]{Remark}
  \renewenvironment{proof}[1][]{
    	\begin{trivlist}
     	\item[\hspace{\labelsep}{\em\noindent Proof#1:\/}]}
     	{{\hfill$\Box$}
    	\end{trivlist}
  }
\newif\ifnotes\notesfalse
\definecolor{mygrey}{gray}{0.50}
\newcommand{\notename}[2]{{\textcolor{cyan}{\footnotesize{\bf (#1:} {#2}{\bf ) }}}}
\newcommand{\noteswarning}{{\begin{center} {\Large WARNING: NOTES ON}\end{center}}}
\newcommand{\notename}[2]{{}}
\newcommand{\noteswarning}{{}}
\begin{document}

\title[No converse for the polynomial method]{Grothendieck inequalities characterize converses to the polynomial method
}

\author[J. Bri\"{e}t]{Jop Bri\"{e}t}
\address{CWI \& QuSoft, Science Park 123, 1098 XG Amsterdam, The Netherlands}
\email{j.briet@cwi.nl}

\author[F. Escudero Guti\'errez]{Francisco Escudero Guti\'errez}
\address{CWI \& QuSoft, Science Park 123, 1098 XG Amsterdam, The Netherlands}
\email{feg@cwi.nl}

\author[S. Gribling]{Sander Gribling}
\address{Tilburg University, Warandelaan 2,  5037 AB Tilburg, The Netherlands}
\email{s.j.gribling@tilburguniversity.edu}

\thanks{\vspace{0.3cm}\\ This version extends arXiv version~1 of this work with most of~\cite{briet2022converses}, which appeared in the proceedings of TQC'22. \includegraphics[height=2ex]{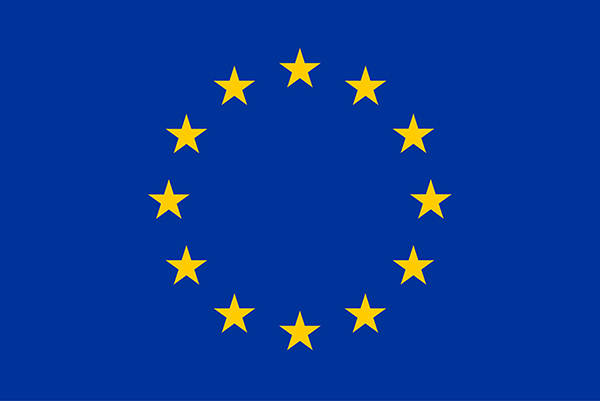} This research was supported by the European Union’s Horizon 2020 research and innovation programme under the Marie Sk{\l}odowska-Curie grant agreement no. 945045, and by the NWO Gravitation project NETWORKS under grant no. 024.002.003.}

\begin{abstract}
	A surprising `converse to the polynomial method' of Aaronson et al.\ (CCC'16) shows that any bounded quadratic polynomial can be computed exactly in expectation by a 1-query algorithm up to a universal multiplicative factor related to the famous Grothendieck constant. Here we show that such a result does not generalize to quartic polynomials and $2$-query algorithms, even when we allow for additive approximations. 
	We also show that the additive approximation implied by their result is tight for bounded bilinear forms, which gives a new characterization of the Grothendieck constant in terms of $1$-query quantum algorithms.  
Along the way we provide reformulations of the completely bounded norm of a form, and its dual norm.
\end{abstract}

\maketitle

\noteswarning

\section{Introduction}\label{sec:intro}
Quantum query complexity is one of the few models of computation in which the strengths and weaknesses of quantum computers can be rigorously studied with currently-available techniques (see \eg,~\cite{Ambainis:2018, Aaronson:2021} for recent surveys).
On the one hand, many of quantum computing's best-known algorithms, such as for unstructured search~\cite{Grover:1996}, period finding (the core of Shor's algorithm for integer factoring)~\cite{Shor:1997} and element distinctness~\cite{Ambainis:2007}, are most naturally described in the query model.
On the other hand, the model admits powerful lower-bound techniques.

For a (possibly partial) Boolean function $f:D\to\pmset{}$ defined on a set $D\subseteq \pmset{n}$, the celebrated \emph{polynomial method} of Beals, Buhrman, Cleve, Mosca and de~Wolf~\cite{polynomialmethod} gives a lower bound on the quantum query complexity of~$f$, denoted~$Q(f)$, in terms of the minimal degree of an approximating polynomial for~$f$, or approximate degree, $\widetilde{\deg}(f)$.
The method relies on the basic fact that for any $t$-query quantum algorithm~$\mathcal{A}$ that takes an $n$-bit input and returns a sign, there is a real $n$-variable polynomial~$p$ of degree at most~$2t$ such that $p(x)=\mathbb{E}[\mathcal{A}(x)]$ for every~$x$. Here, the expectation is taken with respect to the randomness in the measurement done by~$\mathcal A$.\footnote{We identify quantum query algorithms with the (random) functions giving their outputs.}
Using this method, many well-known quantum algorithms were proved to be optimal in terms of query complexity (see \eg,~\cite{BunKhotariThaler:2020} and references therein).

Since polynomials are simpler objects than quantum query algorithms, it is of interest to know how well approximate degree approximates quantum query complexity.
There are total functions~$f$ that satisfy $Q(f) \geq \widetilde{\deg}(f)^{c}$ for some absolute constant $c>1$~\cite{Ambainis:2006, AaronsonBDK:2016};
the second reference gives an exponent $c = 4 - o(1)$, which was shown to be optimal in~\cite{AaronsonBDK:2016}. For partial functions it was recently shown that this separation can even be exponential~\cite{ambainis2023exponentialccc}. These separations rule out a direct converse to the polynomial method, whereby a given bounded degree-$2t$ polynomial~$p$ can be computed by a $t$-query quantum algorithm~$\mathcal A$.
However, since these results concern functions whose approximate degree grows with~$n$, they leave room for the possibility that such an~$\mathcal A$ approximates~$p$ with some error that depends on~$t$.

We will say that a polynomial~$p$ is \emph{bounded} if its restriction to the Boolean hypercube takes values in the interval~$[-1,1]$ and that~$\mathcal A$ \emph{approximates}~$p$ if for some constant \emph{additive} error parameter $\eps < 1$, we have that $|p(x) - \Exp[\mathcal A(x)]| \leq \eps$ for every~$x$. 
Note that an additive error of~1 can trivially be achieved with a uniformly random coin flip.
For a function $p:\{-1,1\}^n\to\mathbb{R}$ and positive integer~$t$, we denote the smallest additive error that a $t$-query quantum algorithm can achieve by
\begin{align}
\err(p ,t) := \inf\big\{ \eps \geq 0 \st &\ \exists \ t\text{-query quantum algorithm } \A \text{ with } \label{eq:Eft}\\
&\ |p(x) - \Exp[\mathcal A(x)]| \leq \eps \quad \forall x \in \pmset{n}\big\}. \nonumber
\end{align}

For bounded polynomials of degree at most~2, a ``multiplicative converse'' to the polynomial method was proved in~ \cite{Aaronson2015PolynomialsQQ}, showing that up to an absolute constant scaling, quadratic polynomials can indeed be computed by 1-query quantum algorithms.

\begin{theorem}[Quadratic multiplicative converse~\cite{Aaronson2015PolynomialsQQ}]\label{query1}
	There exists an absolute constant $C\in(0,1]$ such that  $\mathcal{E}(Cp,1)=0$ for every bounded polynomial~$p$ of degree at most~$2$.
\end{theorem}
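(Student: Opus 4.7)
The plan is to characterize the bounded quadratic polynomials $p$ with $\mathcal E(p,1)=0$ as the unit ball of a Grothendieck-type factorization norm on their coefficient matrices, and then apply Grothendieck's inequality to show that every bounded quadratic polynomial lies in an absolute multiple of this unit ball, yielding the universal constant $C$ of the theorem.

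First, I would analyse the structure of a generic $1$-query algorithm. Expanding the final-state expectation $\mathbb E[\mathcal A(x)] = \langle 0|U_0^* O_x U_1^* \Pi U_1 O_x U_0|0\rangle$ with phase oracle $O_x = \sum_{i=0}^n x_i P_i$ (taking $x_0:=1$ to model the non-query part of the register) for mutually orthogonal projectors $P_i$ summing to $\Id$, one obtains $\mathbb E[\mathcal A(x)] = \sum_{i,j=0}^n M_{ij} x_i x_j$ with $M_{ij} = \Tr(\rho P_i Q P_j)$ for a density operator $\rho$ and a Hermitian contraction $Q$. Using a spectral decomposition of $\rho$ and $Q$, this trace representation can be massaged into an inner-product form $M_{ij} = \langle u_i, v_j\rangle$ with uniformly bounded vectors $u_i, v_j$ in a Hilbert space; conversely, any factorization of $M$ in this form can be implemented by a $1$-query algorithm by a standard construction that uses the $u_i$ to prepare the post-$U_0$ state and the $v_j$ to build the measurement. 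Thus $\mathcal E(p,1)=0$ iff the coefficient matrix of $p$ (with its linear and constant parts absorbed into an $x_0$-row and column) satisfies an absolute bound on its factorization norm, say $\gamma_2^*(M)\le 1$.

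Next I would apply Grothendieck's inequality. Since $p$ is bounded by $1$ on $\pmset n$, a standard decoupling argument---either Rademacher symmetrisation or the trick of enlarging to $\pmset{2n}$ so that $p$ controls an off-diagonal bilinear form---shows that $B(x,y)=\sum_{ij} M_{ij} x_i y_j$ is bounded by an absolute constant on $\pmset n\times\pmset n$, i.e.\ $\|M\|_{\infty\to 1}=O(1)$. Grothendieck's inequality then yields $\gamma_2^*(M)\le K_G\|M\|_{\infty\to 1}=O(1)$, and taking $C$ in the theorem to be the reciprocal of this absolute constant gives $\gamma_2^*(CM)\le 1$, whence the exact implementability of $Cp$ follows from the characterisation above.

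The main obstacle is the first step: pinning down exactly the Hilbert-space factorization norm that characterises $1$-query exact implementation, and handling the constant, linear and diagonal-quadratic terms (the latter collapse via $x_i^2=1$ on the hypercube) by absorbing them into the auxiliary $x_0$-row/column in a way that preserves the norm bound. The decoupling in the second step is classical but requires care to relate a polynomial bound on the diagonal $p(x)=B(x,x)$ to a uniform bound on $B(x,y)$ without losing more than an absolute factor.
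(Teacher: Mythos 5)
Your overall route is the same as the one this paper attributes to the original sources rather than proves itself: \cref{query1} is imported from Aaronson et al., with the bilinear case obtained by applying Grothendieck's inequality to a factorization-type characterization of $1$-query algorithms (the short proof in~\cite{QQA=CBF} cited in Section~1) and the general quadratic case obtained by decoupling. Your three steps -- expand the $1$-query expectation with an $x_0$-register to get a quadratic form, invoke Grothendieck, decouple -- are exactly that argument, so in outline the proposal is sound and not a new route.

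The one substantive issue is the characterization in your first step, which you yourself flag as the main obstacle, and it is worth being precise about why it cannot stand as written. ``$M_{ij}=\langle u_i,v_j\rangle$ with uniformly bounded vectors, and conversely any such factorization is implementable'' describes the unit ball of the factorization norm $\gamma_2$, not of its dual: the all-ones matrix is a Gram matrix of unit vectors, yet $\sum_{i,j}x_ix_j=(\sum_i x_i)^2$ takes values far outside $[-1,1]$, so it cannot be the expectation of any algorithm; hence uniform boundedness is not the right criterion and the stated converse would fail. Your own trace computation already produces the correct normalization: $u_i=P_i\psi$ and $v_j=QP_j\psi$ satisfy $\sum_i\norm{u_i}^2\le 1$ and $\sum_j\norm{v_j}^2\le 1$ (with the $u_i$ pairwise orthogonal), and it is this aggregate-normalized factorization -- equivalently the condition $\norm{h}_{\cb}\le 1$ for the degree-$2$ form $h$ with $h(x,1)=p(x)$, which for matrices coincides with the supremum of $\sum_{i,j}M_{ij}\langle u_i,v_j\rangle$ over unit vectors, i.e.\ the quantity you call $\gamma_2^*(M)$ -- that characterizes $\mathcal E(p,1)=0$; in this paper that is exactly \cref{theo:generalexpressionforE} combined with \cref{prop: cb quadratic} and \cref{cor:cbbilinearforms}. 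Once that characterization is in place, the rest of your plan goes through as stated: matrix Grothendieck gives $\norm{M}_{\cb}\le K_G\norm{M}_{\infty\to 1}$, decoupling (together with $x_i^2=1$ to fold the diagonal into the constant and the $x_0$-row/column for the affine part) bounds $\norm{M}_{\infty\to 1}$ by an absolute constant for bounded quadratic $p$, and rescaling by the reciprocal constant yields $\mathcal E(Cp,1)=0$.
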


This result directly implies the following additive version.

\begin{corollary}[Quadratic additive converse] \label{cor:addconversedeg2}
	There exists an absolute constant $\eps\in(0,1)$ such that the following
	holds. For every bounded polynomial~$p$ of degree at most~$2$, we have $\mathcal{E}(p,1)\leq \eps$.
	In particular, one can take $\eps = 1 - C$ for the constant~$C$ appearing in \cref{query1}.
\end{corollary}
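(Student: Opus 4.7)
The plan is to derive \cref{cor:addconversedeg2} directly from \cref{query1} by observing that a uniform multiplicative rescaling translates cleanly into an additive error bound whenever the polynomial is bounded. Let $C \in (0,1]$ be the absolute constant provided by \cref{query1}, and let $p$ be an arbitrary bounded polynomial of degree at most $2$.

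First I would note that the rescaled polynomial $Cp$ remains a bounded polynomial of degree at most $2$, because
\[
|Cp(x)| = C\,|p(x)| \leq C \leq 1 \qquad \forall x \in \pmset{n}.
\]
Hence \cref{query1} applies with $Cp$ in place of $p$, yielding a $1$-query quantum algorithm $\mathcal A$ such that $\Exp[\mathcal A(x)] = Cp(x)$ for every $x \in \pmset{n}$.

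Next I would compute the resulting additive error against the \emph{original} polynomial $p$: for every $x \in \pmset{n}$,
\[
\bigl|p(x) - \Exp[\mathcal A(x)]\bigr| \;=\; \bigl|p(x) - Cp(x)\bigr| \;=\; (1-C)\,|p(x)| \;\leq\; 1 - C,
\]
where the last inequality uses that $p$ is bounded. This witnesses $\err(p,1) \leq 1-C$, and since $C > 0$ we obtain the admissible constant $\eps := 1 - C \in [0,1)$, proving the corollary.

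I do not anticipate a real obstacle here: the corollary is essentially a one-line consequence of \cref{query1}, and the only point that needs to be checked is that the rescaling $p \mapsto Cp$ preserves membership in the class of bounded degree-$2$ polynomials to which \cref{query1} can be applied. The value $\eps = 1 - C$ is exactly what falls out; any quantitative improvement would require strengthening \cref{query1} itself rather than refining this deduction.
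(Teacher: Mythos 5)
Your argument is in substance the paper's intended deduction (the paper states \cref{cor:addconversedeg2} as an immediate consequence of \cref{query1} without giving a separate proof), and the final bound $\err(p,1)\leq 1-C$ is the right one. There is, however, a small misapplication in the middle step: if you invoke \cref{query1} \emph{with $Cp$ in place of $p$}, its conclusion is $\err(C\cdot(Cp),1)=0$, i.e.\ a $1$-query algorithm whose expectation is $C^2p(x)$, not $Cp(x)$; followed literally, that route only yields the weaker bound $1-C^2$. The correct (and simpler) move is to apply \cref{query1} to $p$ itself, which is already a bounded polynomial of degree at most $2$: this gives $\err(Cp,1)=0$ directly, and your preliminary check that $Cp$ is bounded becomes unnecessary. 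Two minor technical points are worth tidying as well. First, since $\err$ is defined as an infimum, $\err(Cp,1)=0$ strictly yields, for every $\delta>0$, an algorithm $\mathcal A$ with $|Cp(x)-\Exp[\mathcal A(x)]|\leq\delta$ rather than exact equality; the triangle inequality then gives $|p(x)-\Exp[\mathcal A(x)]|\leq (1-C)+\delta$ for all $\delta>0$, hence $\err(p,1)\leq 1-C$ as desired. Second, in the degenerate case $C=1$ one has $1-C=0\notin(0,1)$, but then $\err(p,1)=0$ is bounded by any $\eps\in(0,1)$, so the corollary still holds; with these adjustments your proof is complete and coincides with the paper's.
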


In light of the polynomial method, \cref{cor:addconversedeg2} shows that one-query quantum algorithms are roughly equivalent to bounded quadratic polynomials. 
The authors of~\cite{Aaronson2015PolynomialsQQ} asked whether this result generalizes to higher degrees. 
Two ways to interpret this question are that for any~$k$, any bounded degree-$2k$ polynomial~$p$ satisfies:
\begin{enumerate}[(a)]
\setlength{\leftskip}{2em}
 \item  Multiplicative converse: $\err(Cp,k) =0$ for some $C = C(k)>0$, or;\label{item:optiona}
\item Additive converse:  $\err(p, k)\leq \eps$ for some $\eps = \eps(k)<1$.\label{item:optionb}
\end{enumerate}
The dependence on the degree~$k$ in these options is necessary due to the known separations between bounded-error quantum query complexity and approximate degree. Option~\eqref{item:optiona}, the higher-degree version of Theorem~\ref{query1}, was ruled out  in 
\cite{QQA=CBF}.

\begin{restatable}{theorem}{NoMultConverse}\label{theo:NoMultConverse}
	For any $C>0$, there exist an $n\in \N$ and a bounded quartic $n$-variable polynomial~$p$ such that no two-query quantum algorithm~$\mathcal A$  satisfies $\Exp[\mathcal{A}(x)] = C p(x)$ for every~$x\in\{-1,1\}^n$.
\end{restatable}

Note that Option~\eqref{item:optiona} with $C$ implies Option~\eqref{item:optionb} with $1-C$, but \cref{theo:NoMultConverse} does not rule out Option~\eqref{item:optiona}.

\subsection{Our contributions.}


Our first contribution concerns an error in the original proof of \cref{theo:NoMultConverse}, which was based on a probabilistic example. 
Here, we show that Theorem~\ref{theo:NoMultConverse} holds as stated, both by considering a slightly modified probabilistic example and by giving a completely explicit example.
More importantly, we prove a stronger result that subsumes \cref{theo:NoMultConverse}: we rule out the possibility of Option~\eqref{item:optionb}.

\begin{restatable}{theorem}{NoAdditiveConverse}\label{theo:NoAdditiveConverse}
	There is no constant $\eps\in (0,1)$ such that for every bounded polynomial $p$ of degree at most~$4$, we have $\mathcal{E} (p,2)\leq \eps.$
\end{restatable}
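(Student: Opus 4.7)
My plan is to prove that $\sup_p \err(p,2)=1$ over bounded polynomials $p$ of degree at most $4$, which is clearly equivalent to the stated negative result. The argument proceeds in three stages, all resting on the completely bounded (cb) norm description of quantum query algorithms.

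\textbf{Stage 1: cb-norm characterization.} I would invoke the characterization (established in~\cite{QQA=CBF} and central to this paper) that a polynomial $q\colon\pmset{n}\to\R$ equals $\Exp[\A(x)]$ for some $2$-query algorithm $\A$ if and only if there is a symmetric $4$-linear form $G$ on $(\R^n)^4$ with $\cb(G)\le 1$ and $q(x)=G(x,x,x,x)$. Consequently,
\[
\err(p,2) \;=\; \inf\bigl\{\|p-q\|_\infty \colon q(x)=G(x,x,x,x),\ \cb(G)\le 1\bigr\},
\]
which turns the theorem into an approximation question: produce a bounded $p$ of degree $\le 4$ that is uniformly far from the diagonal of every cb-bounded $4$-linear form.

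\textbf{Stage 2: Reduction to a $4$-linear gap.} Take $n=4m$ and partition $x$ into blocks $(x^{(1)},\ldots,x^{(4)})\in(\pmset{m})^4$. I would test against block-multilinear polynomials $p(x)=F(x^{(1)},\ldots,x^{(4)})$ where $F$ is $4$-linear on $(\R^m)^4$ with $\|F\|_\infty\le 1$ on $(\pmset{m})^4$. Averaging $\prod_i\xi_i\cdot q(\xi_1 x^{(1)},\ldots,\xi_4 x^{(4)})$ over independent signs $\xi_i\in\pmset{}$ extracts the block-multilinear component $\tilde q$ of any candidate $q=G(x,x,x,x)$, a $4$-linear form on $(\R^m)^4$ satisfying $\|F-\tilde q\|_\infty\le\|p-q\|_\infty$. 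A direct expansion gives $\tilde q = 4!\,H$, where $H$ is obtained by evaluating $G$ on block-diagonal inputs; since restriction to a subspace cannot increase the cb-norm, $\cb(H)\le\cb(G)\le 1$. It therefore suffices to exhibit, for each $\delta>0$, a $4$-linear form $F$ with $\|F\|_\infty\le 1$ on $(\pmset{m})^4$ such that $\|F-H\|_\infty\ge 1-\delta$ for every $4$-linear $H$ with $\cb(H)\le 4!$.

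\textbf{Stage 3: Amplifying the gap.} Using the paper's dual formulation of the cb-norm, such an $F$ exists if and only if there is a signed measure $\mu$ on $(\pmset{m})^4$ with $\|\mu\|_{TV}=1$, $\mu(F)\ge 1-\delta$, and $|\mu(H)|\le\delta/4!$ for every $4$-linear $H$ with $\cb(H)\le 1$; equivalently, $\mu$ must have small cb-dual norm while being almost aligned with $F$. A ``seed'' pair $(F_0,\mu_0)$ with some fixed positive gap is supplied by the failure of any quartic Grothendieck-type inequality together with the quantitative lower bound of~\cite{briet2022converses}. To amplify, I would take $N$ rescaled copies of the seed on pairwise disjoint variable blocks, let $F$ be the (normalized) sum of the $F_0$ copies and $\mu$ the (normalized) combination of the $\mu_0$ copies, and argue that $\mu(F)$ stays of order one while the cb-dual norm of $\mu$ shrinks with $N$. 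The main obstacle is precisely this amplification step: it requires that the cb-dual norm decays under the chosen direct sum in a quantitative way, and this is exactly what the new reformulations of the cb-norm and its dual (advertised in the abstract) should enable. Combining the three stages gives $\sup_p\err(p,2)=1$ and proves the theorem.
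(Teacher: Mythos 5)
Your Stages 1--2 follow essentially the same route as the paper (the cb-norm characterization of $2$-query outputs, projection onto the block-multilinear part, and a duality/Hahn--Banach step; this is the content of \cref{theo:lowerSDP}), but the decisive ingredient is missing and your plan for it would not work. What the argument ultimately needs is a sequence of quartic block-multilinear forms with $\norm{p_n}_{\cb}/\norm{p_n}_\infty\to\infty$, equivalently dual witnesses $r_n$ with $\norm{r_n}_{\cb,*}/\norm{r_n}_{\infty,*}\to 0$; a fixed seed gap cannot be amplified by placing rescaled copies on disjoint variable blocks. Under such disjoint direct sums both norms scale linearly: $\norm{\sum_{i\le N}F_0^{(i)}}_\infty=N\norm{F_0}_\infty$ (optimize each block separately), and $\norm{\sum_{i\le N}F_0^{(i)}}_{\cb}=N\norm{F_0}_{\cb}$ (use the same optimal operator assignment on every block for the lower bound, the triangle inequality for the upper bound), so the ratio is invariant; dually, any normalized combination of copies of $\mu_0$ loses in the pairing $\mu(F)$ exactly the factor it gains in the cb-dual norm, since a cb-bounded $H$ may concentrate on a single block. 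The quantitative bound of \cite{briet2022converses} that you invoke as a ``seed'' is precisely a fixed-gap statement ($\eps(4)>\eps_0$) and cannot by itself give $\sup_p\err(p,2)=1$. The paper does not amplify at all: it imports from \cite{briet2022converses} a specific sequence (built from the M\"obius function, resting on the Tao--Ter\"av\"ainen theorem \cite{tao2021quantitative}) whose $\cb/\infty$ ratio is already unbounded, and turns it into dual witnesses by a two-line contradiction argument with duality. Note also that ``failure of a quartic Grothendieck-type inequality'' \emph{means} unboundedness of this ratio; if you use that statement you need no amplification, but as written your proof hinges exactly on the unproven (and, via direct sums, false) amplification step that you yourself flag as the main obstacle.

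A secondary issue is in Stage~1: the characterization from \cite{QQA=CBF} is in terms of the cb norm of the \emph{form}, i.e.\ an infimum over all (not necessarily symmetric) tensor representations, together with a homogenization variable; demanding a symmetric $G$ with $\cb(G)\le 1$ is strictly more restrictive in general (the cb norm of the symmetric tensor $T_q$ can exceed $\norm{q}_{\cb}$), so the ``only if'' direction of your claimed equivalence fails and your argument would not rule out all genuine $2$-query approximations of $p$. This is repairable --- your sign-averaging in Stage~2 goes through for arbitrary tensor representations and for the form-cb-norm, which is exactly the paper's projection lemma and the proof of \cref{theo:lowerSDP} --- but as stated it is a gap in the direction you need.
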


In the context of quantum query complexity of Boolean functions, this rules out arguably the most natural way to \emph{upper} bound~$Q(f)$ in terms of $\widetilde{\deg}(f)$: First, $\eps$-approximate $f$ by a degree-$2t$ polynomial $p$, then $\eps'$-approximate $p$ with a $t$-query quantum algorithm $\mathcal A$, 
with $\eps+\eps'<1$, and finally boost the success probability of $\mathcal A$ so that it approximates $f$, for instance by taking the majority of independent runs of~$\mathcal A$. Corollary~\ref{cor:addconversedeg2} gives the only exceptional case where this is possible in general.


Our second contribution concerns 1-query quantum algorithms. \cref{query1} was proved using a surprising application of Grothendieck's theorem from Banach space theory~\cite{grothendieck1953resume} (see also \cref{sec:intro}).
For bounded bilinear forms, $p(x,y) = x^{\mathsf T}Ay$ given by a matrix $A\in \R^{n\times n}$, the result holds with~$1/C$ equal to the Grothendieck constant~$K_G$ (see~\cite[Section~5]{QQA=CBF} for a short proof).
Determining the precise value of~$K_G$ is a notorious open problem posed in~\cite{grothendieck1953resume}; the best-known lower and upper bounds place it in the interval $(1.676, 1.782)$~\cite{Davie:1984, Reeds:1991, Braverman:2013}.
The general form of \cref{query1} follows from decoupling techniques.
It is not hard to show that~$1/K_G$ is the optimal constant in the bilinear case for the multiplicative setting of \cref{query1}.
Here, we also show that the additive  approximation implied by the multiplicative setting is optimal.

\begin{restatable}{theorem}{characterizingKG}\label{theo:characterizingKG}
	The worst-case minimum error for one-query quantum algorithms satisfies 
 $$\sup_{p}\mathcal{E}(p,1)=1-\frac{1}{K_G},$$ where the supremum is taken over the set of bounded bilinear forms.

\end{restatable}

This complements another well-known characterization of~$K_G$ in terms of the largest-possible Bell-inequality violations in two-player XOR games~\cite{Tsirelson}.

\subsection{Our main technical result}

Both \cref{theo:NoMultConverse,theo:NoAdditiveConverse} are in fact corollaries of our main result (\cref{theo:lowerSDP} below), which gives a formula for $\err(p,t)$ when $p$ is a block-multilinear form. 
Block-multilinear forms already played an important role in other works related to quantum query complexity~\cite{o2015polynomial,Aaronson2015PolynomialsQQ, Bansal:2022}, theoretical computer science \cite{khot2007linear,lovett2010elementary,kane2013prg} and in the polarization theory of functional analysis \cite{bohnenblust1931absolute,harris1972bounds}. 

The formula characterizes $\err(p,t)$ in terms of a ratio of norms appearing naturally in Grothendieck's theorem for bilinear forms. A polynomial $p$ is a bilinear form if it can be expressed as $p(x,y) = x^{\mathsf T}Ay$ for a matrix~$A\in \R^{n\times n}$.
A central role is played by two norms associated to~$A$ (and hence $p$). 
The first is the $\ell_\infty$-norm of~$A$ when identified with the map $(x,y)\mapsto p(x,y)$,
\begin{equation*}
\|A\|_{\infty} = \max_{x,y\in \{-1,1\}^n} x^{\mathsf T}Ay.
\end{equation*}
The second is its completely bounded norm, which may be given by 
\begin{equation} \label{eq:cbnormdefintro}
\|A\|_{\cb} = \sup_{\substack{d\in\mathbb{N}, u,v:[n]\to S^{d-1}}} \sum_{i,j=1}^n A_{i,j} \langle u(i), v(j)\rangle,
\end{equation}
where~$S^{d-1}$ denotes the Euclidean unit sphere of $\mathbb 
R^d$. 
The celebrated Grothendieck theorem asserts that these norms are equivalent up to a constant factor.

\begin{theorem}[Grothendieck's theorem]\label{thm:GT}
 There exists a constant~$K<\infty$ such that for any $n\in \N$ and $A\in \R^{n\times n}$, we have
 \begin{equation}\label{eq:GT}
 \|A\|_{\infty} \leq \|A\|_{\cb} \leq K\|A\|_{\infty}.
 \end{equation}
\end{theorem}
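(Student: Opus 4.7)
My plan is to prove the two inequalities in~\eqref{eq:GT} separately. The lower inequality $\|A\|_{\infty\to 1}\leq \|A\|_{\cb}$ is essentially by definition: given $x,y\in\{-1,1\}^k$ attaining $x^{\mathsf T}Ay=\|A\|_{\infty\to 1}$, set $d=1$, $u(i)=x_i$ and $v(j)=y_j$, viewed as elements of $S^0=\{-1,1\}$. Then $\langle u(i),v(j)\rangle=x_iy_j$, so this choice of $(d,u,v)$ witnesses $\|A\|_{\cb}\geq x^{\mathsf T}Ay=\|A\|_{\infty\to 1}$.

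The upper bound is the substantive content, and I would follow Krivine's classical proof, which combines Gaussian rounding with a preprocessing of the vectors. The starting point is Grothendieck's identity: for a standard Gaussian $g\sim N(0,I_d)$ and unit vectors $u,v\in S^{d-1}$, one has $\Exp[\sign\langle g,u\rangle\,\sign\langle g,v\rangle]=\tfrac{2}{\pi}\arcsin\langle u,v\rangle$. Fixing a near-optimal configuration $u(i),v(j)\in S^{d-1}$ for $\|A\|_{\cb}$ and rounding it directly would produce signs tied to $\arcsin\langle u(i),v(j)\rangle$, whose nonlinearity is problematic. Krivine's trick is to linearize by preprocessing: set $c=\ln(1+\sqrt{2})$, so that $\sinh(c)=1$, and define embeddings $\varphi_+,\varphi_-:S^{d-1}\to\ell^2$ by
\[
\varphi_\pm(w)=\bigoplus_{k\geq 0}(\pm 1)^k\sqrt{\frac{c^{2k+1}}{(2k+1)!}}\,w^{\otimes(2k+1)}.
\]
Term-by-term comparison with the Taylor series of $\sinh$ and $\sin$ gives $\|\varphi_\pm(w)\|^2=\sinh(c)=1$ and $\langle \varphi_+(u),\varphi_-(v)\rangle=\sin(c\langle u,v\rangle)$. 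Applying Gaussian rounding to the new vectors $\varphi_+(u(i)),\varphi_-(v(j))$—truncating to finite dimension and passing to the limit—and using that $\arcsin\sin(c\langle u(i),v(j)\rangle)=c\langle u(i),v(j)\rangle$ (valid because $c<\pi/2$ and $|\langle u(i),v(j)\rangle|\leq 1$), I would obtain signs $x_i,y_j\in\{-1,1\}$ with $\Exp[x_iy_j]=\tfrac{2c}{\pi}\langle u(i),v(j)\rangle$. Summing against $A$ then gives $\sum_{i,j}A_{i,j}\langle u(i),v(j)\rangle=\tfrac{\pi}{2c}\Exp[x^{\mathsf T}Ay]\leq \tfrac{\pi}{2c}\|A\|_{\infty\to 1}$, and taking suprema over admissible configurations yields~\eqref{eq:GT} with $K=\pi/(2\ln(1+\sqrt{2}))\approx 1.782$, which is Krivine's bound.

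The main technical obstacle is constructing and calibrating the Krivine embedding $\varphi_\pm$: the exact coefficients are dictated by the simultaneous constraints that the series sum to \emph{unit} vectors (forcing $\sinh(c)=1$, hence $c=\ln(1+\sqrt{2})$) and that $c\leq\pi/2$, so that $\arcsin\circ\sin$ acts as the identity on $[-c,c]$. Once these constraints are satisfied, everything else—the Grothendieck identity for Gaussian rounding, the Taylor expansions of $\sin$ and $\sinh$, and the finite-dimensional truncation—is routine.
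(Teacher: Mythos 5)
Your proposal is correct, but note that the paper does not prove \cref{thm:GT} at all: Grothendieck's theorem is invoked there as a classical black box, with the inequality and the constant $K_G$ cited from the literature (Grothendieck's r\'esum\'e and the later quantitative works), and the paper's actual contributions only use it through its dual formulation $\|A\|_{\infty\to 1,*}\leq K_G\|A\|_{\cb,*}$. So what you supply is a genuinely self-contained argument where the paper cites. Your route is the standard Krivine proof, and the details check out: the lower bound is indeed immediate by taking $d=1$ and $u(i)=x_i$, $v(j)=y_j\in S^0$ (note that since $x$ can be negated, $\max_{x,y}x^{\mathsf T}Ay=\max_{x,y}|x^{\mathsf T}Ay|$, so the sign issue is harmless); for the upper bound, the calibration $\sinh(c)=1$, i.e.\ $c=\ln(1+\sqrt2)$, together with $c<\pi/2$ is exactly what makes $\arcsin\circ\sin$ act as the identity after the preprocessing, and the computation $\langle\varphi_+(u),\varphi_-(v)\rangle=\sin(c\langle u,v\rangle)$, $\|\varphi_\pm(w)\|=1$ is right. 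Two small points you could tighten: the passage to infinite dimensions is unnecessary, since the finitely many vectors $\varphi_+(u(i)),\varphi_-(v(j))$ span a finite-dimensional subspace of $\ell^2$, so Gaussian rounding can be performed there directly (no truncation-and-limit needed); and it is worth saying explicitly that the same Gaussian $g$ is used to round both families, $x_i=\sign\langle g,\varphi_+(u(i))\rangle$ and $y_j=\sign\langle g,\varphi_-(v(j))\rangle$, since Grothendieck's identity is applied to each pair $(i,j)$ with that common $g$. Your constant $K=\pi/(2\ln(1+\sqrt2))\approx 1.782$ is Krivine's bound, which is slightly weaker than the strict inequality $K_G<\pi/(2\ln(1+\sqrt2))$ quoted in the paper, but it is more than enough for the statement as written (existence of a finite $K$), and of course the paper's results only need $K_G<\infty$ plus its numerical range for interpretation.
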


The non-trivial part of this result is the second inequality in~\eqref{eq:GT} and the smallest~$K$ for which it holds is the above-mentioned Grothendieck constant~$K_G$. 
 Our characterization of $\err(p,1)$ involves the \emph{dual norms} of the $\ell_\infty$-bilinear norm and the completely bounded norm (c.f.\ \cref{sec:notation}). 
 The dual formulation of (the second inequality in) Grothendieck's theorem asserts that, for any matrix~$A$, 
\begin{equation}\label{eq:GT_dual}
\|A\|_{\infty,*} \leq K_G\|A\|_{\cb,*}.
\end{equation}
Similar norms can be defined for block-multilinear forms of higher degree.
Endowing the space of polynomials with the standard inner product of the coefficient vectors in the monomial basis, our formula for $\err(p,t)$ is as follows.  

\begin{theorem}[Informal version of \cref{theo:lowerSDP}] \label{theo:bilinearintro}
    For a block-multilinear form~$p$ of degree $2t$, we have 
    \[
\err(p,t) = \sup_{q}\frac{ \langle p,q\rangle - \|q\|_{\cb,*}}{\|q\|_{\infty,*}}.
\]
where the supremum runs over all block-multilinear forms $q$ of degree $2t$.
\end{theorem}

The proof of \cref{theo:bilinearintro} uses a characterization of quantum query algorithms in terms of completely bounded polynomials~\cite{QQA=CBF}. 
Recently, this characterization was also used to make progress on the problem to determine ``the need for structure in quantum speed-ups'' in works by Bansal, Sinha and de~Wolf \cite{Bansal:2022} and by the second author \cite{gutierrez2023influences}. In addition, it led to a new exact SDP-based formulation for quantum query complexity, due to Laurent and the third author~\cite{gribling2019semidefinite}.
\medskip

\cref{theo:characterizingKG,theo:NoAdditiveConverse} follow from \cref{theo:bilinearintro} by taking suprema over particular sequences of bounded degree-$2t$ block-multilinear forms.
From \cref{theo:bilinearintro} it follows that
\begin{align}\label{eq:superrpt}
    \sup_p\err(p,t) &= \sup_{q}\left[\left(\sup_p\frac{\langle p,q\rangle}{\|q\|_{\infty,*}}\right) -\frac{  \|q\|_{\cb,*}}{\|q\|_{\infty,*}}\right]=1-\inf_{q} \frac{  \|q\|_{\cb,*}}{\|q\|_{\infty,*}}.
\end{align}
Now, \cref{theo:characterizingKG} follows from \cref{eq:superrpt} and 
the dual version of Grothendieck's inequality (\cref{eq:GT_dual}). Similarly, \cref{theo:NoAdditiveConverse} is proven by using \cref{eq:superrpt} and constructing a family of degree-4 polynomials $(p_n)_n$ that witnesses the failure of Grothendieck inequality. By this we mean that $(p_n)_n$ exhibit the separation 
\begin{equation}\label{eq:separationsIntro}
    \frac{\norm{p_n}_{\cb}}{\norm{p_n}_{\infty}}\to \infty.
\end{equation}
By duality this implies that there is a sequence $(r_n)_n$ with $\norm{r_n}_{\cb,*}/\norm{r_n}_{\infty,*}\to 0,$ which alongside \cref{eq:superrpt} implies that $\sup_p\mathcal{E}(p,2)=1,$ as desired.

\subsection{Organization} The rest of the paper is organized as follows. \cref{sec:prelim} compiles the necessary preliminaries, in \cref{sec:maintechresult} we prove our main technical result, in \cref{sec:separations} we show the norm separations displayed in \cref{eq:separationsIntro}, and in \cref{sec:noconverses} we prove \cref{theo:characterizingKG,theo:NoAdditiveConverse}. In \cref{sec:question} we pose an open question, whose positive answer would strenghten our main result and establish a clean link between quantum query algorithms and XOR games. Finally, in \cref{sec:dualnorms} we phrase the dual norms as (efficiently solvable) convex optimization programs and use these formulations to give an alternative proof of our main result. (Strictly speaking, \cref{sec:dualnorms} is not necessary to understand the proofs of \cref{theo:characterizingKG,theo:bilinearintro,theo:NoAdditiveConverse}.)

\section{Preliminaries}\label{sec:prelim}

\subsection{Notation}\label{sec:notation}
For $n\in\mathbb{N}$, write $[n]:=\{1,\dots,n\}$. 
We endow~$\R^d$ with the standard inner product $\langle x, y \rangle = \sum_{i \in [d]} x_i y_i$ and write the resulting norm as~$\|x\| = \sqrt{\langle x,x\rangle}$. 
Denote the set of unit vectors of $\mathbb{R}^{d}$ by $S^{d-1}$. 

We endow the space of matrices~$\R^{d \times d}$ with the standard operator norm. 
We write $M(d)=\R^{d \times d}$ and let $B_{M(d)}$ denote the unit ball in $M(d)$ with respect to the operator norm (i.e., the set of contractions). 

Given a normed vector space $(V, \|\:\:\|)$ with $V\subseteq \R^d$, the dual norm of an element $v\in V$ is given by
\begin{equation*}
    \|v\|_{*} = \sup\{|\langle v,w\rangle| \mid w \in V, \ \|w\| \leq 1\}.
\end{equation*}

\subsection{Polynomials}\label{subsec:normsofpolynomials}

As usual we let $\R[x_1,\dots,x_n]$ be the ring of $n$-variate polynomials with real coefficients, whose elements we write as
\beq\label{eq:poly}
p(x) = \sum_{\alpha\in \Z_{\geq 0}^n} c_\alpha x^\alpha,
\eeq
where $x^\alpha = x_1^{\alpha_1}\cdots x_n^{\alpha_n}$ and $c_\alpha\in \mathbb{R}$. 
We define the support of~$p$ by 
\begin{equation*}
\supp(p) = \{\alpha \in \Z_{\geq 0}^n \st c_\alpha\ne 0\}.    
\end{equation*}
For $\alpha\in \Z_{\geq 0}^n$, write $|\alpha| = \alpha_1 + \cdots + \alpha_n$, which is the degree of the monomial~$x^\alpha$.
A form of degree~$d$ is a homogeneous polynomial of degree~$d$, i.e., a polynomial whose support consists of $\alpha$ for which $|\alpha|=d$. 
Denote by $\R[x_1,\dots,x_n]_{=d}$ the space of forms of degree~$d$.
For~$p$ as in~\cref{eq:poly}, define its homogeneous degree-$d$ part by
\beqn
p_{=d}(x) = \sum_{|\alpha| = d} c_\alpha x^\alpha.
\eeqn
We endow $\mathbb{R}[x_1,\dots,x_n]$ with the inner product given by 
\begin{equation*}
\langle p,q\rangle=\sum_{\alpha\in \mathbb{Z}_{\geq 0}^n} c_{\alpha}c_\alpha',    
\end{equation*}
where $c_{\alpha}$ and $c_{\alpha}'$ are the coefficients of $p$ and $q$, respectively.

Multilinear polynomials (those that are affine in every variable) will play an important role in this work, as they can be identified with the real-valued functions defined on the Boolean hypercube.

We recall the definition of $\norm{\cdot}_{1}$ and $\norm{\cdot}_\infty$, which are seminorms of polynomials in $\mathbb{R}[x_1,\dots,x_n]$, and norms on the space of multilinear polynomials. 
\begin{align*}
	\norm{p}_\infty:=\sup_{x\in\{-1,1\}^n} |p(x)|,\\
	\norm{p}_1:=\mathbb{E}_{x\in\{-1,1\}^n} |p(x)|,
\end{align*}
where the expectation is taken with respect to the uniform probability measure.  

\subsection{Tensors, norms and quantum query complexity}

Toward defining the completely bounded norm, we restrict our attention to forms only. 
For every $p\in\mathbb{R}[x_1,\dots,x_n]_{=t}$ there are many $t$-tensors $T \in\mathbb{R}^{n\times \dots\times n}$ such that $T(x)=p(x)$ for every $x\in \mathbb{R}^n$, where $$T(x):=\sum_{\ind{i}\in [n]^t}T_\ind{i}x(\ind{i}),$$ 
where $x(\ind{i}):=x_{i_1}x_{i_2} \cdots x_{i_t}$. 
We let $\mathcal{S}_n$ be the symmetric group on~$n$ elements.
These tensors only have to satisfy 
\begin{equation}\label{eq:fromTensorsToCoefficients}
	\sum_{\ind{i}\in \mathcal{I}_\alpha}T_{\ind{i}}=c_\alpha\ \ \  \forall\alpha\in\mathbb{Z}_{\geq 0}^n,
\end{equation} where $\mathcal{I}_\alpha$ is the set of elements $\ind{i}$ of $[n]^t$ such that the element $m\in [n]$ occurs~$\alpha_m$ times in $\ind{i}$.
Each of these tensors gives a way of evaluating~$p$ in matrices, namely for every matrix-valued map $A:[n]\to M(d)$, $$T(A):=\sum_{\ind{i}\in [n]^t}T_\ind{i}A(\ind{i}),$$ where $A(\ind{i}) :=A(i_1) A(i_2)\dots A(i_t)$.   It is also important to consider the unique symmetric $t$-tensor $T_p\in \mathbb{R}^{n\times\dots\times n}$  such that $$p(x)=T_p(x).$$ 
For a tensor $T = (T_{\ind i})_{\ind{i} \in [n]^t}$ and a permutation $\sigma\in\mathcal{S}_t$ we define $T \circ \sigma := (T_{\sigma(\ind{i})})_{\ind{i} \in [n]^t}$, where $\sigma(\ind i):=((\sigma(i_1),\dots,\sigma(i_d))$. We say that $T$ symmetric if $T = T \circ \sigma$ for all $\sigma \in \mathcal S_t$. The entries of this $T_p$ are given by
\begin{equation}\label{eq:fromCjtoTp}
	(T_p)_\ind{i}=\frac{c_{e_{i_1}+\dots+e_{i_t}}}{\tau(i_1,\dots,i_t)}\ \mathrm{for}\ \ind{i}\in [n]^t,
\end{equation}
where $\tau(i_1,\dots,i_t)$ is the number of distinct permutations of the sequence $(i_1,\dots,i_t)$ and $\{e_i\}$ is an orthonormal basis of $\mathbb R^n$. The completely bounded norm of a tensor $T$ is given by\footnote{This is the completely bounded norm of $T$ when regarded as an element of $\ell^1_n\otimes_h\dots\otimes_h\ell^1_n$, where $h$ stands for the Haagerup tensor product, which determines a well-studied tensor norm. See for instance \cite[Chapter 17]{paulsenoperatoralgebras}.}
\begin{align}
\norm{T}_{\cb} = \sup  &\Big\{\Big\|\sum_{\ind{i}\in [n]^t}T_{\ind{i}}A_1(i_1)\dots A_t(i_t)\Big\| \st  A_s\colon[n] \to B_{M(d)},\ s \in [t],\ d\in\mathbb{N}\Big\}. \label{eq:defTcb}
\end{align}
It is not hard to show that for a $2$-tensor (matrix)~$T$ this definition agrees with the one given in the introduction in \cref{eq:cbnormdefintro}.

The completely bounded norm of a form $p$ is given by
\begin{equation}\label{eq:cboriginal}
	\norm{p}_{\cb}=\inf \sum_{\sigma\in\mathcal{S}_t}\norm{T^\sigma}_{\cb},
\end{equation}
where the infimum runs over all the families of $t$-tensors $\{T_\sigma\}_{\sigma\in S_t}$ such that $\sum_{\sigma\in\mathcal{S}_t} T^\sigma\circ\sigma=T_p.$ In \cref{sec:reformulationcbnorm} we provide easier expressions for both $\norm{T}_{\cb}$ (showing that a single contraction-valued map $A\colon[n] \to B_{M(d)}$ suffices) and $\norm{p}_{\cb}$. 
To prove our main technical result, Theorem~\ref{theo:lowerSDP}, which expresses~$\mathcal E(p,t)$ in the language of completely bounded polynomials, we combine these simplified expressions with the following characterization of quantum query algorithms, proved in~\cite{QQA=CBF}.

\begin{theorem}[Quantum query algorithms are completely bounded forms]\label{theo:generalexpressionforE}
	Let $p:\{-1,1\}^n\to[-1,1]$  and let $t\in\mathbb{N}$. Then,
	\begin{align*}
		\mathcal{E}(p ,t)=\inf \quad &\norm{p-q}_{\infty}\\
						 \mathrm{s.t.} \ \ \, &h \in \R[x_1,\ldots,x_{n+1}]_{=2t} \text{ with } 
						 \norm{h}_{\cb}\leq 1\\
				& q:\{-1,1\}^n\to\mathbb{R},\ \mathrm{with}\ q(x)=h(x,1)\ \forall\ x\in\{-1,1\}^n.
	\end{align*}
\end{theorem}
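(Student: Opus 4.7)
The plan is to derive this as a straightforward consequence of the characterization of quantum query algorithms established in~\cite{QQA=CBF}. Unpacking the definition of $\mathcal{E}(f,t)$ given in~\eqref{eq:Eft}, what we need to show is that the set $\{q:\{-1,1\}^n \to \R \,:\, \exists\ t\text{-query } \mathcal A \text{ with } q = \mathbb E[\mathcal A]\}$ coincides with the set of functions $q$ of the form $q(x) = h(x,1)$ for some $h \in \R[x_1,\ldots,x_{n+1}]_{=2t}$ with $\norm{h}_{\cb}\leq 1$. Once this equivalence is in place, both sides of the optimization problem for $\mathcal E(f,t)$ minimize $\norm{f-q}_\infty$ over the same set of admissible $q$, and the two formulations agree.

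For the direction ``algorithm $\Rightarrow$ bounded cb-form,'' I would follow the template from the standard polynomial method argument of~\cite{polynomialmethod}. Writing a $t$-query algorithm as $U_t O_x U_{t-1} O_x \cdots O_x U_0$ applied to a fixed initial state, with the oracle $O_x$ decomposed as $\sum_{i=1}^n x_i P_i + P_0$ on mutually orthogonal projectors, the expected output of the algorithm is a polynomial in $x$ of degree at most $2t$. To make the expression homogeneous of degree exactly $2t$, I would introduce an auxiliary variable $x_{n+1}$ and replace $P_0$ by $x_{n+1} P_0$; evaluating at $x_{n+1}=1$ then recovers the original expectation. Reading off the matrix-valued maps $[n+1]\to B_{M(d)}$ built from the $U_k$'s and $P_i$'s, one sees from the very definition in \eqref{eq:cboriginal} that the resulting form $h$ of degree $2t$ satisfies $\norm{h}_{\cb}\le 1$.

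The converse inclusion ``bounded cb-form $\Rightarrow$ algorithm'' is the main technical content. Starting from a form $h$ with $\norm{h}_{\cb}\leq 1$, one uses the defining infimum in~\eqref{eq:cboriginal} to extract, up to arbitrarily small slack, contractive matrix-valued maps $A_s:[n+1]\to B_{M(d)}$ and unit vectors such that the alternating product of the $A_s(i)$'s, summed against the coefficients of $h$, recovers~$h$. The main obstacle is that the $A_s(i)$ are merely contractions rather than unitaries, so they do not directly define quantum operations. I would address this via the Sz.-Nagy dilation, realizing each $A_s(i)$ as the upper-left corner of a unitary on a doubled Hilbert space, and implementing the alternating matrix products using a fixed ancilla register together with $t$ calls to~$O_x$. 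The role of $x_{n+1}$ is to encode ``no-query'' branches by ancilla projections, which is what keeps the degree equal to exactly $2t$ after restricting to $x_{n+1}=1$. Checking that the constructed algorithm uses only $t$ queries to the oracle and has expected output exactly $h(x,1)$ — in particular keeping careful track of dimensions and symmetrization over~$\mathcal S_{2t}$ so that no additional queries leak in from the dilation — is the bulk of the argument, and is what~\cite{QQA=CBF} carries out in detail.
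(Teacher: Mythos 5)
The paper does not actually prove \cref{theo:generalexpressionforE}: it is imported verbatim as the main characterization of \cite{QQA=CBF}, so there is no internal argument to compare against, and your reduction of the statement to the identity between the set of expectations $\{\E[\mathcal A] \st \mathcal A \text{ a } t\text{-query algorithm}\}$ and the set $\{h(\cdot,1) \st h\in\R[x_1,\dots,x_{n+1}]_{=2t},\ \norm{h}_{\cb}\leq 1\}$ is exactly the content being cited. Your sketch of the forward direction is fine at this level: the Beals et al.\ representation, homogenized with the dummy variable $x_{n+1}$, produces a single $2t$-tensor $T$ built from the unitaries and oracle projectors with $T(x,1)=\E[\mathcal A(x)]$ and $\norm{T}_{\cb}\leq 1$, and one passes to $\norm{h}_{\cb}\leq 1$ via the uniform decomposition in \eqref{eq:cboriginal} (this is precisely the first inequality in \cref{prop:newexpressioncbnorm}).

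If, however, your description of the converse is read as the actual argument, it glosses over the one genuinely technical point and, as stated, would not go through. ``Implementing the alternating matrix products using a fixed ancilla register together with $t$ calls to $O_x$'' is exactly what cannot be done naively: a product of $2t$ oracle-dependent factors placed in the amplitudes costs $2t$ queries, while with only $t$ queries the state has degree $t$ in $x$ and any observable's expectation is a degree-$2t$ polynomial arising from pairing the bra with the ket. Sz.-Nagy dilation only repairs non-unitarity; the essential device in \cite{QQA=CBF} is to split the length-$2t$ product between the two sides: prepare a superposition of two branches carrying (dilations of) the first $t$ and the last $t$ factors respectively, let each query advance both branches via controlled operations, and measure an observable whose expectation extracts the cross term $\langle u, A_1(x)\cdots A_{2t}(x)v\rangle$. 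Also, the ``symmetrization over $\mathcal S_{2t}$'' is not where extra queries could leak in; it only enters in relating a tensor to the form through \eqref{eq:cboriginal}. Since you explicitly defer these details to \cite{QQA=CBF}, your proposal is acceptable as a citation-level justification—matching how the paper itself treats the theorem—but it is not self-contained at this step.
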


\subsection{Block-multilinear forms} Our main result \cref{theo:lowerSDP} is stated for a special kind of polynomials, which are the block-multilinear forms.

\begin{definition}\label{def:bml}
Let $\mathcal P = \{I_1,\dots,I_t\}$ be a partition of~$[n]$ into~$t$ (pairwise disjoint) non-empty subsets.
Define the set of \emph{block-multilinear polynomials with respect to~$\mathcal P$} to be the linear subspace
\beqn
V_\mathcal P = \text{span}\big\{ x_{i_1}\cdots x_{i_t} \st i_1\in I_1,\dots, i_t\in I_t\big\}.
\eeqn
\end{definition}

We also work with the larger space of polynomials spanned by monomials where in the above we replace linearity by odd degree.

\begin{definition}\label{def:proj}
For a family $\mathcal Q\subseteq 2^{[n]}$ of pairwise disjoint subsets, let $W_\mathcal Q\subseteq \R[x_1,\dots,x_n]$ be the subspace of polynomials spanned by monomials~$x^\alpha$ with  $\alpha\in \Z_{\geq 0}^n$ satisfying 
\begin{equation} \label{eq:proj}
\sum_{i \in I} \alpha_i \equiv 1\bmod 2 \:\: \forall I\in \mathcal Q.
\end{equation}
We use $\Pi_\mathcal Q: \R[x_1,\dots,x_n]\to W_\mathcal Q$ to refer to the orthogonal projector onto $W_\mathcal Q.$
\end{definition}

\begin{remark}\label{rem:VPinvariant}
	Given a partition $\mathcal{P}$ of $[n]$, we have $V_\mathcal{P}\subset W_\mathcal{P}$. In particular,~$V_{\mathcal P}$ consists of precisely the multilinear polynomials in~$W_{\mathcal P}$.
\end{remark}

Although the projector~$\Pi_\mathcal Q$ onto~$W_\mathcal Q$ is properly defined on the space of polynomials, we will slightly abuse notation and let it act on a $t$-tensor~$T \in \R^{n\times\cdots \times n}$ as follows.
Define~$\mathcal I_\mathcal Q\subseteq [n]^t$ to be the set of $t$-tuples that contain an odd number of elements from each set~$I\in \mathcal{Q}$.
Then, we let $\Pi_\mathcal QT$ be the tensor given by
\begin{equation}\label{eq:projectionOfTensors}
	(\Pi_{\mathcal{Q}}T)_\ind{i}:=\left\{ \begin{array}{ll}
		T_{\ind{i}} & \mathrm{if}\ \ind{i}\in\mathcal{I}_\mathcal{Q},\\
		0 & \mathrm{otherwise}.
	\end{array}\right.
\end{equation}
It is not hard to see that if~$p$ is a polynomial satisfying $T(x)=p(x)$ for every $x \in \{-1,1\}^n$, then $\Pi_\mathcal{Q}T(x)=\Pi_\mathcal{Q}p(x)$ for every $x\in\{-1,1\}^n$.

We note that all the norms and seminorms we have mentioned are norms on the space $V_\mathcal{P}$ for any partition $\mathcal{P}$ of $[n]$. Hence, we can take the dual of these norms with respect to this subspace, so from now on $\norm{p}_{\infty,*}$ and $\norm{p}_{\cb,*}$ will be the dual of $\norm{p}_{\infty}$ and~$\norm{p}_{\cb}$ of $V_\mathcal{P}$, respectively. 
By contrast, when we write~$\norm{R}_{\cb,*}$ for some $t$-tensor $\mathbb{R}^{n\times\dots\times n}$ we refer to the dual norm of the completely bounded norm of~$R$ with respect to the whole space of $t$-tensors. 

We stress that $\norm{\cdot}_{\infty,*}$ need not be equal to $\norm{\cdot}_1$. This is because we are taking the dual norms with respect to $V_\mathcal{P}$ and not with respect to the space of all multilinear maps, in which case the dual norm would be $\norm{p}_1$. The following example shows that $\norm{p}_{\infty,*}\neq \norm{p}_1$ in general. 
\begin{example}\label{ex:p1neqpinfty*}
	Consider $n=3$, $t=1$, $p=(x_1+x_2+x_3)/3$ and 
    $\mathcal{P}=\{[3]\}$, so $V_\mathcal{P}$ is the set of linear polynomials. Then, $\norm{p}_1>1/3$, but $\norm{p}_{\infty,*}\leq 1/3$. Indeed, as $|p(x)|\geq 1/3$ for every $x\in\{-1,1\}^3$ and $|p(x)|>1/3$ for some $x\in\{-1,1\}^3$, we have that $\norm{p}_1>1/3$. Note that if $q$ is linear, then $\norm{\hat{q}}_1=\norm{q}_\infty$, where $\hat{q}$ is the Fourier transform of $q$. Hence
	\begin{align*}
		\norm{p}_{\infty,*}&=\sup_{q\in V_\mathcal{P},\norm{q}_\infty\leq 1}\langle p,q\rangle=\sup_{q\in V_\mathcal{P},\norm{\hat{q}}_1\leq 1}\langle \hat{p},\hat{q}\rangle\leq \sup_{\norm{\hat{q}}_1\leq 1}\norm{\hat{p}}_\infty\norm{\hat{q}}_1=\frac{1}{3},
	\end{align*}
	where in second equality we used Parseval's identity (see \cite[Chapter  1]{o2014analysis} for an introduction to Fourier analysis in the Boolean hypercube).
\end{example}


\section{$\mathcal{E}(p,t)$ for block-multilinear forms}\label{sec:maintechresult}
In this section we formally state and prove our main result:

\begin{theorem}\label{theo:lowerSDP}
	Let $\mathcal{P}$ be a partition of $[n]$ in $2t$ subsets and $p\in V_\mathcal{P}$. Then, 
	\begin{align*}
		\err(p,t)=\sup\left\{\langle p,r\rangle-\norm{r}_{\cb,*} \st r\in V_\mathcal{P},\ \norm{r}_{\infty,*}\leq 1\right\}.
	\end{align*}
\end{theorem}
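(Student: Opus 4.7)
The plan is to combine Theorem \ref{theo:generalexpressionforE} with the projection machinery developed in \cref{prop:projections,cor:projections} to first reduce the infimum over completely bounded extensions to an infimum over $q \in V_\mathcal{P}$ with $\|q\|_{\cb} \leq 1$, and then invoke finite-dimensional convex duality.

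For the reduction, given any valid pair $(h,q)$ in Theorem \ref{theo:generalexpressionforE}, I would set $q' := \Pi_\mathcal{P} h$. The key observation is that $\mathcal P$ partitions $[n]$ into $2t$ parts, while $h \in \R[x_1,\dots,x_{n+1}]_{=2t}$. Any monomial $x^\alpha$ surviving the projection must have odd total degree in each of the $2t$ parts; these $2t$ odd integers must sum to at most $2t$, forcing each contribution to be exactly $1$ and forcing $\alpha_{n+1}=0$. Thus $q' \in V_\mathcal P$ and $q'(x)=\Pi_\mathcal P h(x,1)$. Lemma \ref{cor:projections} (for the cb-norm on forms) gives $\|q'\|_{\cb}\leq \|h\|_\cb \leq 1$. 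Since $p \in V_\mathcal P\subseteq W_\mathcal P$ satisfies $\Pi_\mathcal P p = p$, linearity of $\Pi_\mathcal P$ and Lemma \ref{cor:projections} (for $\|\cdot\|_\infty$, applied to the multilinear representative of $p-q$) yield $\|p-q'\|_\infty = \|\Pi_\mathcal P(p-q)\|_\infty \leq \|p-q\|_\infty$. Conversely, any $q \in V_\mathcal P$ with $\|q\|_\cb \leq 1$ is admissible by taking $h=q$ (viewed trivially as not depending on $x_{n+1}$). Hence
\[
\err(p,t) \;=\; \inf\bigl\{\|p-q\|_\infty \st q \in V_\mathcal P,\ \|q\|_\cb \leq 1\bigr\}.
\]

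For the duality, $V_\mathcal P$ is finite-dimensional, so by biduality within $V_\mathcal P$,
\[
\|p-q\|_\infty \;=\; \sup\bigl\{\langle p-q, r\rangle \st r \in V_\mathcal P,\ \|r\|_{\infty,*}\leq 1\bigr\}.
\]
The unit balls $B_{\cb} = \{q\in V_\mathcal P : \|q\|_\cb \leq 1\}$ and $B_{\infty,*}=\{r\in V_\mathcal P : \|r\|_{\infty,*}\leq 1\}$ are convex and compact and the objective $(q,r)\mapsto \langle p-q,r\rangle$ is linear in each argument, so Sion's minimax theorem lets me interchange $\inf_q$ and $\sup_r$:
\[
\err(p,t) \;=\; \sup_{r\in B_{\infty,*}}\Bigl[\langle p,r\rangle - \sup_{q\in B_\cb}\langle q,r\rangle\Bigr] \;=\; \sup_{r\in B_{\infty,*}}\bigl[\langle p,r\rangle - \|r\|_{\cb,*}\bigr],
\]
using the definition of the dual norm on $V_\mathcal P$.

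The main obstacle is the first step: one must check that the projection $\Pi_\mathcal P h$ of a degree-$2t$ form in $n+1$ variables actually lies in $V_\mathcal P$ (a parity-and-degree count that leverages the equality between the number of parts and the degree) and that neither the cb-norm grows nor the sup-norm distance to $p$ increases under projection. Once this is set up, the duality step is standard finite-dimensional minimax, and the claim follows directly.
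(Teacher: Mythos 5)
Your proposal is correct and follows essentially the same route as the paper's first proof: project onto $W_{\mathcal P}$, use the degree/parity count to conclude the projection lands in $V_{\mathcal P}$ and is independent of $x_{n+1}$, reduce to $\inf\{\|p-q\|_\infty \st q\in V_{\mathcal P},\ \|q\|_{\cb}\leq 1\}$, and then swap infimum and supremum by a minimax theorem (the paper uses von Neumann's, you use Sion's — a cosmetic difference). The only step the paper treats more explicitly is certifying $\|\Pi_{\mathcal P}q\|_{\cb}\leq 1$ as an $n$-variable form, which it does by projecting a tensor certificate for $h$ via \cref{prop:newexpressioncbnorm} and observing that no surviving entry involves the index $n+1$.
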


For the proof, we use more convenient expressions for the completely bounded norms and the fact that the projector $\Pi_Q$ is contractive under several norms.


\subsection{Reformulation of the completely bounded norm of a form}\label{sec:reformulationcbnorm}
In this section we derive alternative expressions for the completely bounded norms that facilitate our proof of \cref{theo:lowerSDP}.
These expressions may also be useful in future applications of the of characterization quantum query algorithms in terms of completely bounded forms. After this section, we will implicitly use the formulas of \cref{prop:newexpressioncbnorm,prop:polarizationTensors} as definitions for $\norm{p}_{\cb}$ and $\norm{T}_{\cb}$, respectively.

First, we provide a simpler expression (compared to \cref{eq:cboriginal}) for $\norm{p}_{\cb}$. 

\begin{restatable}{proposition}{cbreformulation} \label{prop:newexpressioncbnorm}
	Let $p\in\mathbb{R}[x_1,\dots,x_n]_{=t}$ be a form of degree $t$. Then, 
	$$\norm{p}_{\cb}=\inf\Big\{\norm{T}_{\cb} \st p(x)=T(x)\ \forall x\in\mathbb{R}^n\Big\}.$$ 
\end{restatable}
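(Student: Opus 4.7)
The plan is to prove the equality via two short inequalities, using the elementary observation that $(T\circ\sigma)(x) = T(x)$ for any vector $x\in\R^n$ and any $\sigma\in\mathcal{S}_t$ (because the variables $x_i$ commute, permuting indices does not change the monomial $x_{i_1}\cdots x_{i_t}$), whereas $\norm{T\circ\sigma}_{\cb}$ differs from $\norm{T}_{\cb}$ in general, since permuting tensor indices corresponds to reordering noncommutative matrix products.

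For $\norm{p}_{\cb}\le\inf\{\norm{T}_{\cb} : T(x)=p(x)\ \forall\,x\in\R^n\}$, given any admissible $T$ I would set $T_\sigma := \tfrac{1}{t!}T$ for every $\sigma\in\mathcal{S}_t$. Then $\sum_\sigma T_\sigma\circ\sigma = \tfrac{1}{t!}\sum_\sigma T\circ\sigma$, which is a symmetric tensor whose polynomial equals $T(x)=p(x)$; by the uniqueness of $T_p$ recorded in~\eqref{eq:fromCjtoTp}, it must coincide with $T_p$. Hence the family $(T_\sigma)_{\sigma\in\mathcal{S}_t}$ is admissible in the original definition~\eqref{eq:cboriginal}, and $\sum_\sigma\norm{T_\sigma}_{\cb} = \norm{T}_{\cb}$ exactly, yielding the bound.

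For the reverse inequality, given an admissible decomposition $\sum_\sigma T_\sigma\circ\sigma = T_p$, I would define $T := \sum_\sigma T_\sigma$, crucially \emph{without} composing with $\sigma$. The commutative-evaluation observation then gives $T(x) = \sum_\sigma T_\sigma(x) = \sum_\sigma (T_\sigma\circ\sigma)(x) = T_p(x) = p(x)$ for every $x\in\R^n$, so $T$ is admissible on the right-hand side, and the triangle inequality yields $\norm{T}_{\cb} \le \sum_\sigma\norm{T_\sigma}_{\cb}$; taking the infimum over all decompositions finishes the proof. The only real pitfall — and where I expect a first attempt to go wrong — is the temptation to take $T=T_p=\sum_\sigma T_\sigma\circ\sigma$ in this direction; that would require $\norm{T_\sigma\circ\sigma}_{\cb}=\norm{T_\sigma}_{\cb}$, which fails precisely because of the noncommutativity mentioned above. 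The correct fix, and the heart of the argument, is that $\sum_\sigma T_\sigma$ and $\sum_\sigma T_\sigma\circ\sigma$ evaluate to the same polynomial on $\R^n$ even though they have different $\cb$-norms.
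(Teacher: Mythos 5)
Your proposal is correct and follows essentially the same route as the paper's own proof: the same uniform decomposition $T_\sigma = T/t!$ combined with the uniqueness of the symmetric tensor $T_p$ for one direction, and the same construction $T=\sum_\sigma T_\sigma$ (evaluated commutatively, so $(T_\sigma\circ\sigma)(x)=T_\sigma(x)$) with the triangle inequality for the other. Your remark about the pitfall of taking $T=T_p$ directly is accurate but does not change the argument.
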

\begin{proof}
We begin by noting that for any $t$-tensor $T$ and permutation $\sigma \in \mathcal S_t$, we have 
	\begin{align}
		(T \circ \sigma) (x) &= \sum_{\ind{i}\in [n]^t} T_{\sigma(\ind{i})}x(\ind{i}) =\sum_{\ind{i}\in [n]^t} T_{\sigma(\ind{i})}x(\sigma(\ind{i})) = T(x).\label{eq:Tosigma}
	\end{align}
    We first show that $\|p\|_{\cb} \leq  \inf \{\|T\|_{\cb} \st p(x) = T(x) \}$. By restricting in \eqref{eq:cboriginal} to decompositions where $T^{\sigma} = T/t!$ for all $\sigma$, we have 
	\begin{align*}
		\|p\|_{\cb} &= \inf \Big\{\sum_{\sigma \in \mathcal S_t} \|T^{\sigma}\|_{\cb} \st T_p = \sum_{\sigma} T^{\sigma} \circ \sigma \Big\} \\
		&\leq \inf \Big\{\|T\|_{\cb} \st T_p = \frac{1}{t!} \sum_{\sigma} T \circ \sigma \Big\}.
	\end{align*}
    Hence, it suffices to show that 
    $T_p = \frac{1}{t!} \sum_{\sigma} T \circ \sigma$ if and only if $p(x) = T(x)$ for all $x \in \R^n$.
	For the `only if' implication, assume $T_p = \frac{1}{t!} \sum_\sigma T \circ \sigma$, then for any $x \in \R^n$  
 we have that $$p(x) = T_p(x) = \frac{1}{t!} \sum_{\sigma} T \circ \sigma(x) = \frac{1}{t!} \sum_{\sigma} T(x) = T(x),$$ where in the second equality we have used \eqref{eq:Tosigma}. For the reverse implication, 
 assume that $p(x) = T(x)$ for all $x \in \R^n$. Then, using \eqref{eq:Tosigma}, we obtain \[
 \frac{1}{t!} \sum_\sigma T \circ \sigma(x) = p(x)
 \]
 for all $x \in \R^n$ and therefore $T_p = \frac{1}{t!} \sum_\sigma T \circ \sigma$ since $T_p$ is the unique symmetric $t$-tensor with $T_p(x) = p(x)$ for all $x \in \R^n$. This concludes the proof of the first inequality.

	Now for the other inequality, $\|p\|_{\cb} \geq \inf \{\|T\|_{\cb} \st p(x) = T(x) \}$, let $T^\sigma$ ($\sigma \in \mathcal S_t$) be such that $T_p = \sum_\sigma T^{\sigma} \circ \sigma$ and define $T = \sum_{\sigma} T^{\sigma}$. Then we have 
	\[
	p(x) = T_p(x) = \sum_{\sigma} T^{\sigma} \circ \sigma (x) = \sum_{\sigma} T^{\sigma}(x) = T(x),
	\]
	and $\|T\|_{\cb} \leq \sum_{\sigma} \|T^{\sigma}\|_{\cb}$ by the triangle inequality.
\end{proof}
\medskip

Second, we show that the contraction-valued maps~$A_s$ in the definition of~$\norm{T}_{\cb}$ (see Equation \eqref{eq:defTcb}) can be taken to be the same. This result can be understood as the fact that the polarization constant of completely bounded multilinear maps is~$1$.\footnote{In Banach space theory a $t$-linear map $T:X\times\dots\times X\to Y$ determines a homogeneous degree-$t$ polynomial $P:X\to Y:A\to T(A,\dots,A)$. The operator norms of $T$ and $P$ are equivalent if $T$ is symmetric: $\norm{T}\leq \norm{P}\leq K(t)\norm{T},$ where $K(t)$ is the polarization constant  of degree $t$. For a survey on the topic see \cite[Section 5.1]{moslehian2022similarities}.}

\begin{proposition}\label{prop:polarizationTensors}
	Let $T\in\mathbb{R}^{n\times \cdots \times n}$ be a $t$-tensor. Then, 
 \begin{equation*}
     \norm{T}_{\cb}=\sup\big\{\|T(A)\| \st A\colon[n] \to B_{M(d)},\   
	d\in\mathbb{N}\big\}.
 \end{equation*}
\end{proposition}

\begin{proof}
	Let $\vertiii{T}$ be the expression in the right-hand side of the statement. Note that it is the same as the expression of $\norm{T}_{\cb}$, but now the contraction-valued maps $A_1,\dots,A_t$ are all equal. This shows
	that $\vertiii{T}\leq \norm{T}_{\cb}$. To prove the other inequality, let $A_1,\dots,A_t\colon [n]\to B_{M(d)}$ and $u,v\in S^{d-1}$. Now, define the contraction-valued map $A$ by $A(i):=\sum_{s\in[t]} e_s e_{s+1}^\mathsf{T}\otimes A_s(i)$ for $i \in [n]$, and define the unit vectors $u':=e_1\otimes u$ and $v':=e_{t+1}\otimes v.$ They satisfy $$\langle u,A_1(i_1)\dots A_t(i_t)v\rangle=\langle u',A(\ind{i})v'\rangle \quad \text{for all } \ind i \in [n]^t,$$
	so in particular 
	\[
	\sum_{\ind{i}\in [n]^t}T_{\ind{i}}\langle u,A_1(i_1)\dots A_t(i_t)v\rangle =\sum_{\ind{i}\in [n]^t}T_{\ind{i}} \langle u',A(\ind{i})v'\rangle.
	\]
	Taking the supremum over all maps $A_s$ and $u,v$ shows that $\norm{T}_{\cb} \leq \vertiii{T}$,
	which concludes the proof.
\end{proof}
\medskip

\subsection{Contractivity of the projector $\Pi_Q$.}
A key element of the proof of \cref{theo:lowerSDP} is that can restrict the infimum in \cref{theo:generalexpressionforE} to the space of polynomials $W_{\mathcal Q}$ given in Definition \ref{def:proj}. 
To do that, we prove that the orthogonal projector onto this space,~$\Pi_{\mathcal Q}$ is contractive in several norms. This will follow from the fact that $\Pi_{\mathcal Q}$ has a particularly nice structure in the form of an averaging operator. Let $\mathcal{Q}$ be a family of disjoint subsets of $[n]$.
For each $I\in\mathcal{Q}$ let $z_I$ be a random variable that takes the values $-1$ and $1$ with probability $1/2$ and let $z=(z_I)_{I\in\mathcal{Q}}$.
For a bit string $x\in\{-1,1\}^n$, we define the random variable $x\cdot z \in \{-1,1\}^n$ as 
\begin{equation*}
(x\cdot z)(i):=\left\{\begin{array}{ll}
	x_iz_I& \mathrm{if\ }i\in I\ \mathrm{for\ some\ }I\in\mathcal{Q}, \\
	x_i& \mathrm{otherwise}.
\end{array}\right.    
\end{equation*}
For a matrix-valued map $A:[n]\to M(d)$ we define the random variable $A\cdot z$ in an analogous way.
\begin{proposition}\label{prop:projections}
  For any $p\in\mathbb{R}[x_1,\dots,x_n]$ and $x\in \R^n$, we have that
		\begin{align*}
			\Pi_\mathcal{Q}p(x)&=\mathbb{E}_{z} \Big[p(x\cdot z)\prod_{I\in \mathcal{Q}}z_I\Big].
		\end{align*}
		Similarly, for any $t$-tensor $T\in\mathbb{R}^{n\times \dots\times n}$, positive integer~$d$ and matrix-valued map $A\colon [n] \to M(d)$, 
        we have that 
    \begin{align*}
		\Pi_\mathcal{Q}T(A)&=\mathbb{E}_{z} \left[T(A\cdot z)\prod_{I\in \mathcal{Q}}z_I\right].
	\end{align*} 
\end{proposition}

\begin{proof}
	By linearity, it suffices to prove the equality for monomials. 
	Let $\alpha\in\mathbb{Z}_{\geq 0}^n$. Then we have 
	\[
	(x \cdot z)^\alpha  \prod_{I\in \mathcal{Q}}z_I = x^{\alpha} \prod_{I \in \mathcal Q} z_I^{1+\sum_{i \in I} \alpha_i}.
	\]
	It follows that
	\[
	\mathbb{E}_{z}\Big[(x\cdot z)^\alpha \prod_{I\in \mathcal{Q}}z_I\Big] = 
	\begin{cases} x^{\alpha} & \text{if } 1+\sum_{i \in I} \alpha_i = 0 \bmod 2 \ \forall I \in \mathcal Q, \\
	0 & \text{otherwise}.
	\end{cases}
	\]
	It remains to observe that this is precisely the projection of $x^\alpha$ on $W_{\mathcal Q}$. The statement for tensors follows analogously. 
\end{proof}

Finally, we prove that $\Pi_{\mathcal Q}$ is contractive with respect to the relevant norms.

\begin{lemma}\label{cor:projections}
	Let $\mathcal{Q}$ be a family of disjoint subsets of $[n]$ and $p\in\mathbb{R}[x_1,\dots,x_n]$ and let $\mathrm{norm}\in\{\cb,\infty,1\}$ where for the $\cb$-norm we moreover require $p$ to be homogeneous. Then $$\norm{\Pi_\mathcal{Q}p}_{\mathrm{norm}}\leq \norm{p}_{\mathrm{norm}}.$$
\end{lemma}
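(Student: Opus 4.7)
The plan is to derive all three inequalities from a common template: use the averaging formulas of \cref{prop:projections} to write $\Pi_\mathcal{Q} p$ (or $\Pi_\mathcal{Q} T$) as an expectation of signed evaluations of $p$ (resp.~$T$), then apply the triangle inequality to pull the norm inside the expectation, and finally observe that the random sign-flips act as measure-preserving/contraction-preserving symmetries of the domain.

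For the $\infty$-norm, I would fix $x\in\{-1,1\}^n$, note that $x\cdot z\in\{-1,1\}^n$ almost surely, and estimate
\[
|\Pi_\mathcal{Q} p(x)| = \Bigl|\E_z\bigl[p(x\cdot z)\prod_{I\in\mathcal{Q}} z_I\bigr]\Bigr| \leq \E_z|p(x\cdot z)| \leq \|p\|_\infty,
\]
using $|\prod_I z_I|=1$; then take the sup over $x$. For the $1$-norm, the same chain of inequalities gives
\[
\|\Pi_\mathcal{Q} p\|_1 \leq \E_x \E_z |p(x\cdot z)| = \E_{x'}|p(x')| = \|p\|_1,
\]
where the equality uses that $x\mapsto x\cdot z$ is a bijection on $\{-1,1\}^n$ preserving the uniform measure for each fixed $z$ (and swapping the order of expectations is Fubini).

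For the $\cb$-norm, I would lift to tensors. By \cref{prop:newexpressioncbnorm}, for any $\varepsilon>0$ there exists a $t$-tensor $T$ with $T(x)=p(x)$ for all $x\in\R^n$ and $\|T\|_{\cb}\leq \|p\|_{\cb}+\varepsilon$. Since the identity $T(x)=p(x)$ on $\R^n$ implies $(\Pi_\mathcal{Q} T)(x)=(\Pi_\mathcal{Q} p)(x)$ as polynomials, \cref{prop:newexpressioncbnorm} again yields $\|\Pi_\mathcal{Q} p\|_{\cb}\leq \|\Pi_\mathcal{Q} T\|_{\cb}$. Now for any contraction-valued map $A\colon[n]\to B_{M(d)}$, the map $A\cdot z$ is also contraction-valued (each $z_I$ is a scalar of modulus~$1$), so by the tensor version of \cref{prop:projections} and the triangle inequality for the operator norm,
\[
\|\Pi_\mathcal{Q} T(A)\| = \Bigl\|\E_z\bigl[T(A\cdot z)\prod_{I\in\mathcal{Q}} z_I\bigr]\Bigr\| \leq \E_z \|T(A\cdot z)\| \leq \|T\|_{\cb}.
\]
Taking the supremum over $A$ and $d$, then letting $\varepsilon\to 0$, gives $\|\Pi_\mathcal{Q} p\|_{\cb}\leq \|p\|_{\cb}$.

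There isn't a substantive obstacle here: once \cref{prop:projections} is in hand the bound is a textbook ``averaging operators are contractive'' argument. The only point requiring a little care is the $\cb$-norm step, where one has to pass between the polynomial-level and tensor-level statements and verify that multiplying a contraction by a $\pm 1$ scalar preserves the contraction property, so that the random perturbation $A\cdot z$ stays in the feasible set defining $\|T\|_{\cb}$.
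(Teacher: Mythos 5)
Your proposal is correct and follows essentially the same route as the paper: apply \cref{prop:projections}, pull the norm inside the expectation via the triangle inequality, use that $x\cdot z$ (resp.\ $A\cdot z$) stays in the hypercube (resp.\ stays contraction-valued), and for the $\cb$-norm pass through the tensor formulation of \cref{prop:newexpressioncbnorm}. The only cosmetic difference is that you use an $\varepsilon$-near-optimal tensor while the paper takes the infimum over all tensors representing $p$ at the end, which is equivalent.
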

\begin{proof}  First, we consider the $\norm{\cdot}_\infty$ norm. For every $x\in\{-1,1\}^n$, we have that $x\cdot z\in\{-1,1\}^n$, so $$|\Pi_\mathcal{Q}p(x)|\leq \mathbb{E}_z|p(x\cdot z)\prod_{I\in\mathcal{Q}} z_I|=\mathbb{E}_z|p(x\cdot z)|\leq \mathbb{E}_z \norm{p}_{\infty}=\norm{p}_{\infty},$$
where in the first inequality we used \cref{prop:projections} and the triangle inequality. 

Second, we consider $\norm{\cdot}_\cb$. Arguing as in the $\norm{\cdot}_\infty$ case and using \cref{prop:polarizationTensors}, it follows that for any $t$-tensor $T\in\mathbb{R}^{n\times\dots\times n}$ we have that $\norm{\Pi_\mathcal{Q}T}_{\cb}\leq \norm{T}_{\cb}$. Given that $\Pi_\mathcal{Q}p(x)=\Pi_\mathcal{Q}T(x)$ if $p(x)=T(x)$, it follows that $$\norm{\Pi _\mathcal{Q}p}_{\cb}\leq \norm{\Pi_\mathcal{Q}T}_{\cb}\leq \norm{T}_{\cb} $$ for every $t$-tensor $T\in \mathbb{R}^{n\times\dots\times n}$ such that $T(x)=p(x)$. Taking the infimum over all those $T$ we arrive at $\norm{\Pi_\mathcal{Q}p}_{\cb}\leq \norm{p}_{\cb}$. 

Finally, for $\norm{\cdot}_1$ we have $$\norm{\Pi_{\mathcal Q} p}_1=\mathbb{E}_x|\mathbb{E}_{z} p(x\cdot z)\prod_{I\in \mathcal{Q}}z_I|\leq \mathbb{E}_{x}\mathbb{E}_{z} |p(x\cdot z)|=\mathbb{E}_z\mathbb{E}_x|p(x)|=\norm{p}_1,$$ where in the first equality we have used \cref{prop:projections} and in the third we have used the fact that the uniform measure is invariant under multiplication by $z\in\{-1,1\}^n$. 
\end{proof}

\subsection{Putting everything together}

We are now ready to prove \cref{theo:lowerSDP}.
	To this end, we start from the expression given in \cref{theo:generalexpressionforE} for $\err(p,t)$ and let $h \in \R[x_1,\ldots,x_{n+1}]_{=2t} \text{ with } 
	\norm{h}_{\cb}\leq 1$ and let $q:\{-1,1\}^n\to\mathbb{R}$ be defined by $q(x)=h(x,1)$ for every $x\in\{-1,1\}^n$. 
	
	We first show that we can project $q$ (and $h$) onto $W_{\mathcal P}$ and obtain a feasible solution whose objective value is at least as good as $q$. Since $\mathcal P$ is a partition of $[n]$, it defines a family of disjoint subsets of $[n+1]$, so by \cref{cor:projections}, we have $\norm{\Pi_\mathcal{P}h}_{\cb}\leq\norm{h}_{\cb}\leq 1$. 
	Since the degree of $h$ is at most $2t$, the polynomial $\Pi_{\mathcal{P}}h$ has degree at most $2t$. This shows that each monomial in its support contains exactly one 	variable from each of the $2t$ sets in $\mathcal{P}$. We can therefore observe that $\Pi_\mathcal{P}h$ does not depend on $x_{n+1}$. Since $h(x,1) = q(x)$, we have $\Pi_\mathcal{P}h(x,1)=\Pi_\mathcal{P}q(x)$ and therefore $\Pi_\mathcal{P}q\in V_\mathcal{P}$. We then use \cref{prop:newexpressioncbnorm} to show that $\norm{\Pi_\mathcal{P} q}_{\cb} \leq 1$. Indeed, applying $\Pi_\mathcal{P}$ to a $2t$-tensor $T \in \R^{(n+1) \times \cdots (n+1)}$ that certifies $\norm{h}_{\cb} \leq 1$ results in a tensor $\Pi_\mathcal{P}T$ that satisfies $\Pi_\mathcal{P}T (\ind{i}) = 0$ whenever $\ind i$ contains an index equal to $n+1$. So,  $\Pi_\mathcal{P}T(x,1)=\Pi_\mathcal{P}q(x)$ for every $x\in\{-1,1\}^n$ and thus $\Pi_{\mathcal P}T$, viewed as a $2t$-tensor in $\R^{n \times \cdots \times n}$, 
	certifies $\norm{\Pi_\mathcal{P} q}_{\cb}\leq 1$.
	For the objective value of $\Pi_{\mathcal P} q$ we finally observe that 
	$$\norm{p-\Pi_\mathcal{P}q}_\infty=\norm{\Pi_\mathcal{P}(p-q)}_\infty\leq \norm{p-q}_\infty,$$
	where  we used that $p\in V_\mathcal{P}$ in the equality and  \cref{cor:projections}  in the inequality. This shows that 
	\begin{align*}
		\mathcal{E}(p,t)\geq \inf\{ \norm{p-q}_{\infty} \st q\in V_\mathcal{P} \text{ with } \norm{q}_{\cb}\leq 1\}.
	\end{align*}
	To show that the above inequality is in fact an equality it suffices to observe that given a polynomial $q\in V_\mathcal{P}$, we can define $h \in \R[x_1,\ldots,x_{n+1}]$ 
	as $h(x,x_{n+1})=q(x)$ and then we have $\norm{h}_{\cb}\leq\norm{q}_{\cb}$. 
	
	Finally, in the above reformulation of $\err(p,t)$, we can express $\|p-q\|_\infty$ in terms of its dual norm and obtain 
	\begin{align*}
		\mathcal{E}(p,t)= \inf_q\sup_r \quad &  \langle p-q,r\rangle\\
		 \mathrm{s.t.} \quad & q\in V_\mathcal{P}\ \mathrm{with\ }\norm{q}_{\cb}\leq 1,\\
		& r\in V_\mathcal{P}\ \mathrm{with\ }\norm{r}_{\infty,*}\leq 1.
	\end{align*} 

    Finally, we need the von Neumann's minimax theorem (see \cite{Nikaid1954OnVN} for a  proof). 

    \begin{theorem}[Minimax]\label{theo:minimax}
        Let $X$ and $Y$ convex compact sets. Let $f:X\times Y\to \mathbb R$ such that $f$ is concave in the first variable and convex in the second. Then, 
        $$\sup_{x\in X}\inf_{y\in Y} f(x,y)= \inf_{y\in Y}\sup_{x\in X} f(x,y).$$
    \end{theorem}
    
	\noindent The desired result then follows by exchanging the infimum and supremum, which we are allowed to do by \cref{theo:minimax}.

\section{Separations between infinity and completely bounded norms}
\label{sec:separations}
In this section we show that the completely bounded norm of a degree 4 bounded polynomial can be unbounded. In other words, we prove the following Theorem.

\begin{theorem}\label{theo:separations}
    There is a sequence $p_n\in \mathbb R[x_1,\dots,x_n]_{=4}$ such that $$\frac{\norm{p_n}_{\cb}}{\norm{p_n}_{\infty}}\to \infty.$$
\end{theorem}

To prove \cref{theo:separations} we first provide a framework to lower bound the completely bounded norm inspired on a technique due to Varopoulos \cite{Varopoulos:1974}.\footnote{We use the same construction as the one proposed by Varopoulos, but we apply it to multilinear polynomials, which gives it the extra property displayed in \cref{eq:var1}} Second, we construct two sequences of bounded polynomials, one random and one explicit, that fit in that framework and have unbounded completely bounded norm.

\subsection{Lower bounding the completely bounded norm}\label{sec:preliminaries}

We will first talk about  general cubic forms, that is polynomials given by:
\beq\label{eq:3form}
p(x) = \sum_{S\in {[n]\choose 3}}c_S \prod_{i\in S}x_i,
\eeq
where the~$c_S$ are some real coefficients. We will lower bound its completely bounded norm. Then, we will extend this lower bound to an associated quartic form, given by $x_0p(x)$. For $i\in [n]$, define the $i$th \emph{slice} of~$p$ to be the symmetric matrix $M_i\in \R^{n\times n}$ with $(j,k)$-coefficient equal to~$c_{\{i,j,k\}}$ if $i,j,k$ are pairwise distinct and~$0$ otherwise. Then, define 
\beqn
\Delta(p) = \max_{i\in[n]} \|M_i\|.
\eeqn

\begin{lemma}[tri-linear Varopoulos decomposition]\label{lem:varopoulos}
Let~$p$ be an $n$-variate multilinear cubic form as in~\eqref{eq:3form}.
Then, for some $d\in \N$, there exists ~$A:[n]\to B_{M(d)}$ and orthogonal vectors~$u,v\in S^{d-1}$ such that $[A(j),A(i)]=0$, and
\begin{align}
A(i)^2 &=0\label{eq:var1}\\
\langle u, A(i)v\rangle &=0\label{eq:var2}\\
\langle u, A(i)A(j)v\rangle &=0\label{eq:var3}\\
\langle u, A(i)A(j)A(k)v\rangle &= \frac{c_{\{i,j,k\}}}{\Delta(p)}\label{eq:var4}
\end{align}
for all pairwise distinct $i,j,k\in [n]$.
\end{lemma}

\begin{proof}
For each $i\in[n]$, define~$M_i$ as above.
Define $W_i = \Delta(p)^{-1}M_i$ and note that $W_i$ has operator norm at most~1.
For each $i\in[n]$, define the $(2n+2)\times(2n+2)$ block matrix
\beq\label{eq:2}
A(i)
=
{\footnotesize
\left[
\begin{array}{c|c|c|c}
\phantom{M}&&&\\
\hline
e_i&\phantom{M}&&\\
\hline
&W_i^\mathsf{T}&\phantom{M}&\\
\hline
&&e_i^\mathsf{T}&\phantom{M}
\end{array}
\right]},
\eeq
where the first and last rows and columns have size~1, the second and third have size~$n$ and
where the empty blocks are filled with zeros.
Define $u = e_{2n+2}$ and $v = e_1$. \cref{eq:var2} is true because the bottom left corner of $A(i)$ is 0. From \cref{eq:2} follows that  
\beqn
A(i)A(j)
=
{\footnotesize
\left[
\begin{array}{c|c|c|c}
\phantom{x}&&&\\
\hline
&\phantom{x}&&\\
\hline
W_i^\mathsf{T}e_j&&\phantom{x}&\\
\hline
&e_i^\mathsf{T}W_j^\mathsf{T}&&\phantom{x}
\end{array}
\right]}, \ A(i)A(j)A(k)
=
{\footnotesize
\left[
\begin{array}{c|c|c|c}
\phantom{x}&&&\\
\hline
&\phantom{x}&&\\
\hline
&&\phantom{x}&\\
\hline
e_iW_j^\mathsf{T}e_k & &&\phantom{x}
\end{array}
\right]},
\eeqn
so \cref{eq:var3,eq:var4} follow by looking at the bottom left corner of these matrices. Finally, the property that~$A(i)^2 = 0$ follows from the fact that
\beqn
A(i)^2
=
{\footnotesize
\left[
\begin{array}{c|c|c|c}
\phantom{x}&&&\\
\hline
&\phantom{x}&&\\
\hline
W_i^\mathsf{T}e_i&&\phantom{x}&\\
\hline
&e_i^\mathsf{T}W_i^\mathsf{T}&&\phantom{x}
\end{array}
\right]}
\eeqn
and that the $i$th row and $i$th column of~$M_i$ (and hence $W_i$) are zero.
\end{proof}

\begin{corollary}\label{cor:ccblb}
Let~$p$ be an $n$-variate multilinear cubic form as in~\eqref{eq:3form}.
Suppose that an $(n+2)$-variate quartic form $h\in \R[x_0,x_1,\dots,x_n,z]$ satisfies $h(x,1) = x_0p(x_1,\dots,x_n)$ for every $x\in \pmset{n+1}$.
Then,
\beqn
\|h\|_{\cb} \geq \frac{\|p\|_2^2}{\Delta(p)}.
\eeqn
\end{corollary}

\begin{proof}
From the orthonormality of the characters, it follows that~$h$ and~$x_0p$ have equal coefficients for each quartic multilinear monomial in the variables $x_0,\dots,x_n$, which are~$c_S$ for~$x_0\chi_S$ with $S\in {[n]\choose 3}$ and~0 otherwise.
Let~$A:[n]\to B_{M(d)}$ and $u,v\in S^d$ be as in Lemma~\ref{lem:varopoulos}, and extend $A$ by $A(0) = \id, A(n+1) = 0$.
Commutativity and properties~\eqref{eq:var1}--\eqref{eq:var3} imply that if a quartic monomial expression $A((i,j,k,l))$ with $i,j,k,l\in \{0,\dots,n+1\}$ has repeated indices or an index equal to~$n+1$, then $\langle u, A((i,j,k,l))v\rangle = 0$.
With this, it  follows that, for every $T_h$ such that $T_h(x)=h(x)$, we have
\begin{align}\label{eq:hfeq}
\norm{T_h}_{\cb}&\geq \sum_{\ind i \in (\{0\}\cup [n+1])^4}T_{\ind i}\Big\langle u, A(\ind i)v\Big\rangle=\sum_{S\in {[n]\choose 3}} c_S\Big\langle u, \, A(0)\prod_{i\in S}A(i)v\Big\rangle.
\end{align}
Finally, if we use that $A(0)=I$, property~\eqref{eq:var4} and Parseval's identity, we obtain the desired result:
\begin{align}
\norm{h}_{\cb}=\inf\norm{T_h}_{\cb}&\geq 
 \sum_{S\in {[n]\choose 3}} c_S \langle u, \prod_{i\in S}A(i)v\rangle\nonumber=
\Delta(p)^{-1}\sum_{S\in {[n]\choose 3}}c_S^2\nonumber=
\frac{\|p\|_2^2}{\Delta(p)}.\nonumber
\end{align}
\end{proof}


\subsection{A separation based on a random example}\label{subsec:randomEx}

We begin by defining a random cubic form as in~\eqref{eq:3form} where the coefficients~$c_S$ are chosen to be independent uniformly distributed random signs.
Parseval's identity then gives $\|p\|_2^2 = {n\choose 3}$.
We now use a standard random-matrix inequality to upper bound $\Delta (p)$ (see \cite[Corollary~2.3.6]{tao2012topics} for a proof). 

\begin{lemma}\label{lem:taorandommatrix}  There exist absolute constants $C,c\in (0,\infty)$ such that the following holds. 
	Let $n$ be a positive integer and let~$M$ be a random $n\times n$ symmetric random matrix such that for $j\geq i$, the entries~$M_{ij}$ are independent random variables with mean zero and absolute value at most~1. 
	Then, for any $\tau \geq C$, we have
	\begin{align*}
	\Pr\big[\|M\|>\tau\sqrt{n}\big]\leq Ce^{-c\tau n}.
	\end{align*}
\end{lemma}

Applying \cref{lem:taorandommatrix} to the slices $M_i$ and the union bound then imply that $\Delta(p) \leq C\sqrt{n}$ with probability $1 - \exp(-Cn)$.
By Hoeffding's inequality~\cite[Theorem~2.8]{boucheron:concentration} and the union bound, we have that $\|p\|_\infty \leq Cn^2$ with probability $1 - \exp(-Cn)$.
Rescaling~$p$ then gives that there exists a bounded multilinear cubic form such that $\|p\|_2^2/\Delta(p) \geq C\sqrt{n}$. Now \cref{theo:separations} follows from \cref{cor:ccblb}.

\subsection{A construction based on an explicit example}\label{sec:explicitexample}

We also give an explicit construction using techniques from~\cite{briet2018failure}, which were used there to disprove a conjecture on a tri-linear version of Grothendieck's theorem.
We do not exactly use the construction from that paper because it involves complex functions.
Instead, we will use the M\"{o}bius function (defined below), which is real valued and has the desired properties. 

The construction uses some notions from additive combinatorics.
For a function $f:\Z_n\to[-1,1]$ (on the cyclic group of order~$n$), define the 3-linear form
\begin{equation*}
    p(x_1,x_2,x_3) = \sum_{a,b\in \Z_n}x_{1,a}x_{2, a+b}x_{3,a+2b} f(a+3b).
\end{equation*}
where $x_1,x_2,x_3\in \{-1,1\}^n$ and the sums of $a$ and $b$ are done in $\mathbb Z_n$. 

We begin by upper bounding $\Delta(p)$. The polynomial~$p$ has $3n$ slices, $M_{i,a}\in \R^{[3]\times\Z_n}$ for each $i\in[3]$ and $a\in \Z_n$, which we view as $3\times 3$ block-matrices with blocks indexed by~$\Z_n$.
The slice $M_{1,a}$ is supported only on the $(2,3)$ and $(3,2)$ blocks, which are each others' transposes.
On its $(2,3)$ block it has value $f(a+3b)$ on coordinate $(a+b, a+2b)$ for each~$b$.
In particular, this matrix has at most one nonzero entry in each row and column.
It follows that a relabeling of the rows turns $M_{1,a}$ into a diagonal matrix with diagonal entries in~$[-1,1]$, and therefore~$\|M_{1,a}\| \leq 1$.
Similarly, we get that $\|M_{i,a}\|\leq 1$ for $i=2,3$.
Hence, 
\begin{equation}\label{eq:upperboundDelta}
    \Delta(p) \leq 1.
\end{equation}
for any choice of $f$. 

Now we will choose a specific $f$ for which we will be able to upper bound $\norm{p}_{\infty}$ and lower bound $\norm{p}_2^2$. Identify~$\Z_n$ with $\{0,1\dots,n-1\}$ in the standard way.
We choose~$f$ to be the M\"{o}bius function restricted to this interval.
That is, set $f(0) = 0$ and for $a> 0$, set
\beqn
f(a) = 
\begin{cases}
1 & \text{if~$a$ is square-free with an even number of prime factors}\\
-1 & \text{if~$a$ is square-free with an odd number of prime factors}\\
0 & \text{otherwise.}
\end{cases}
\eeqn

The infinity norm of $p$ can be upper bounded in terms of the Gowers $3$-uniformity norm of $f$. This norm plays a central role in additive combinatorics and is defined by 
\begin{equation*}
    \|f\|_{U^3} = \Big(\Exp_{a,b_1,b_2,b_3\in \Z_n}\prod_{c\in \bset{3}}f(a + c_1b_1 + c_2b_2 + c_3b_3)\Big)^{\frac{1}{8}}.
\end{equation*}
The proof of the announced bound can be found in~\cite[Proposition~1.11]{Green:2007}.

\begin{lemma}[generalized von Neumann inequality]\label{lem:gvn} 
Suppose that~$n$ is coprime to~6. 
Then, for any $f:\Z_n\to [-1,1]$, we have that
\begin{equation*}
    \|p\|_\infty \leq n^2\|f\|_{U^3}.
\end{equation*}
\end{lemma}

\noindent A recent result by Tao and Ter\"{a}v\"{a}inen~\cite{tao2021quantitativej} given an upper bound to the Gowers 3-uniformity norm of the M\"obius function. 
\begin{theorem}\label{theo:TaoTer}
    Let $f:\mathbb Z_n\to \mathbb R$ be the M\"obius function. Then, 
    \beqn
\norm{f}_{U^3}\leq \frac{1}{(\log \log n)^C}.
\eeqn
for some constant~$C>0$.
\end{theorem}
\noindent Combining \cref{lem:gvn,theo:TaoTer} it follows that 
\begin{equation}\label{eq:pinftyupperbound}
    \norm{p}_{\infty}\leq\frac{n^2}{(\log \log n)^C} 
\end{equation}
for some constant $C>0.$

To lower bound $\norm{p}_2^2$ we begin using Parseval's identity, which implies that
\beq\label{eq:ParsevalMobius}
\|p\|_2^2 = n\sum_{a\in \Z_n}f(a)^2.
\eeq
Given that $|f(a)|^2$ is 1 if $a$ is square-free and $0$ otherwise, we can use a classical result of number theory to lower bound $\norm{p}_2^2$ (see \cite[page 269]{hardy1979introduction} for a proof). 

\begin{proposition}\label{prop:square-free}
    There are $\frac{6}{\pi^2}n-O(\sqrt{n})$ natural numbers between 1 and $n$ that are square-free. 
\end{proposition}

\noindent From \cref{eq:ParsevalMobius,prop:square-free} follows that

\beq\label{eq:p2lowerbound}
\|p\|_2^2 = \frac{6}{\pi^2}n^2-O(\sqrt{n^3}).
\eeq

Finally,  we substitute $p$ by $p/(n^2/(\log \log n)^C)$, and it follows from \cref{eq:pinftyupperbound,eq:p2lowerbound,eq:upperboundDelta} that $p$ is bounded and 

\beqn
\frac{\|p\|_2^2}{\Delta(p)} \geq \frac{6}{\pi^2}(\log\log n)^C - o(1).
\eeqn
Again, \cref{theo:separations} now follows from \cref{cor:ccblb}.

\begin{remark}
The \emph{jointly completely bounded norm} of~$p$ is given by
\beqn
\|p\|_{\jcb} = \sup_{d\in \N} \|\sum_{a,b\in \Z_n}A(1,a)A(2, a+b)A(3,a+2b) f(a+3b)\|, 
\eeqn
where the supremum is taken over maps $A:[3]\times [n]\to \C^{d\times d}$ such that  $\|A(i,a)\|\leq 1$ and $[A(i,a),A(j,b)] = [A(i,a),A(j,b)^*] = 0$ for all ${i\ne j}$ and $a,b\in \Z_n$.
This norm can also be stated in terms of tensor products and the supremum is attained by observable-valued maps. 
As such, this norm appears naturally in the context of non-local games.
It was shown in~\cite{BBBLL:2019} that Proposition~\ref{lem:gvn} also holds for the jointly completely bounded norm, that is $\|p\|_{\jcb} \leq n^2\|f\|_{U^3}$.
The proof of Corollary~\ref{cor:ccblb} easily implies that $\|p\|_{\cb} \geq \|p\|_2^2/\Delta(p)$.
This was used in~\cite{briet2018failure} to prove that the $\jcb$ and $\cb$ norms are inequivalent.
\end{remark}
\section{Grothendieck inequalities characterize converses to the polynomial method}\label{sec:noconverses}

In this section we show, as a corollary of our main result \cref{theo:lowerSDP}, that Grothendieck inequalities characterize converses to the polynomial method . By this we mean that: i) for 1-query algorithms an additive converse is possible 
and moreover this converse characterizes $K_G$; and ii) for 2-query algorithms no additive converse is possible, because Grothendieck's inequality fails for 3-linear forms.

\subsection{Characterizing $K_G$ with $1$-query quantum algorithms}\label{sec:characterizingKG}

Here we prove \cref{theo:characterizingKG}. Before doing that, we should prove \cref{cor:cbbilinearforms}, which states that the completely bounded norm of a bilinear form $p$ regarded as polynomial is equal to its completely bounded norm as matrix $A$. This is not obvious from \cref{eq:cboriginal},
because for a~$t$-tensor~$T$ and permutation~$\sigma \in \mathcal S_t$, it is in general not true that $\|T\|_{\cb}=\|T\circ \sigma\|_{\cb}$ (see for instance~\cite{QQA=CBF}). 
It is not hard to show however, that when $T$ is a matrix we have $\|T^{\mathsf T}\|_{\cb}= \|T \|_{\cb}$. This gives the following reformulation of the completely bounded norm of forms of degree $2$. 

\begin{proposition} \label{prop: cb quadratic}
	Let $p \in \R[x_1,\ldots,x_n]_{=2}$ be a quadratic form and let $T_p$ be the unique symmetric matrix associated to $p$ via \eqref{eq:fromCjtoTp}. Then 
	$\|p\|_{\cb} = \|T_p\|_{\cb}$.
\end{proposition}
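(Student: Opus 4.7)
The plan is to combine \cref{prop:newexpressioncbnorm}, which expresses $\|p\|_{\cb}$ as an infimum of $\|T\|_{\cb}$ over all $2$-tensors $T$ representing $p$, with the observation that $T_p$ is obtained from any such $T$ by symmetrization, together with the already-noted fact that the completely bounded norm of a matrix is invariant under transposition.

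\medskip

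\textbf{Upper bound.} By \cref{prop:newexpressioncbnorm},
\[
\|p\|_{\cb} = \inf\big\{\|T\|_{\cb} \st T \in \R^{n \times n},\ T(x)=p(x)\ \forall x \in \R^n\big\}.
\]
Since $T_p$ is itself a valid choice in this infimum, $\|p\|_{\cb} \leq \|T_p\|_{\cb}$.

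\medskip

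\textbf{Lower bound.} Let $T$ be any matrix with $T(x)=p(x)$ for all $x\in\R^n$. Then $(T + T^{\mathsf T})/2$ is a symmetric matrix that still represents $p$ (since $T^{\mathsf T}(x) = T(x)$ for all $x$, as swapping the roles of the two $x$'s changes nothing), and uniqueness of $T_p$ among symmetric matrix representations (\cref{eq:fromCjtoTp}) forces $T_p = (T + T^{\mathsf T})/2$. Applying the triangle inequality for $\|\cdot\|_{\cb}$ together with the identity $\|T^{\mathsf T}\|_{\cb} = \|T\|_{\cb}$ stated just before the proposition yields
\[
\|T_p\|_{\cb} \leq \tfrac{1}{2}\big(\|T\|_{\cb} + \|T^{\mathsf T}\|_{\cb}\big) = \|T\|_{\cb}.
\]
Taking the infimum over all representing matrices $T$ then gives $\|T_p\|_{\cb} \leq \|p\|_{\cb}$.

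\medskip

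\textbf{Main obstacle.} There is no real obstacle; the argument is essentially a two-line symmetrization once \cref{prop:newexpressioncbnorm} is available. The only substantive ingredient beyond that proposition is the invariance $\|T^{\mathsf T}\|_{\cb} = \|T\|_{\cb}$ for $2$-tensors, which the paper already flags as elementary (it follows from \cref{prop:polarizationTensors} by replacing the contraction-valued map $A$ by $i \mapsto A(i)^{\mathsf T}$ and using $\|M\| = \|M^{\mathsf T}\|$ on the resulting operator). The reason this clean statement is special to degree $2$ is precisely that for higher $t$ there is no analogous invariance of $\|\cdot\|_{\cb}$ under arbitrary permutations in $\mathcal S_t$, so symmetrizing an arbitrary representing tensor no longer preserves the norm.
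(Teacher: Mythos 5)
Your argument is correct and follows essentially the same route as the paper's own proof: both use \cref{prop:newexpressioncbnorm}, the identity $T_p=(T+T^{\mathsf T})/2$ for any representing matrix $T$, the transpose-invariance $\|T^{\mathsf T}\|_{\cb}=\|T\|_{\cb}$, and the triangle inequality (the paper simply spells out the transpose-invariance inside the proof rather than deferring it, as you do with your parenthetical sketch). No gaps; your explicit appeal to uniqueness of the symmetric representative to justify $T_p=(T+T^{\mathsf T})/2$ is a fine small addition.
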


\begin{proof}
	Let $T\in\mathbb{R}^{n\times n}$ be a matrix. 
	First, we have 
	\begin{align}
		\label{eq: cb of transpose}
		\|T^{\mathsf T} \|_{\cb} &= \sup \Big\{\|\sum_{i,j} T_{j,i} A(i) B(j) \| \st A,B\colon [n]\to B_{M(d)}\Big\}  \\
		&= \sup \Big\{\|\sum_{i,j} T_{j,i}  B(j)^{\mathsf T} A(i)^{\mathsf T} \| \st A,B\colon [n]\to B_{M(d)}\Big\} \notag \\
		&= \|T\|_{\cb}, \notag
	\end{align}
	where we use (twice) that for any matrix $M$ we have $\|M\| = \|M^{\mathsf T}\|$. 

	Let $T\in\mathbb{R}^{n\times n}$ be a matrix with $p(x)=T(x)$. Then, $T_p=(T+T^\mathsf{T})/2$, so from the above and the triangle inequality it follows that 	
	\[
	\|T_p\|_{\cb} = \frac{1}{2}\norm{T + T^{\mathsf T}}_{\cb} \leq \frac{1}{2}(\|T\|_{\cb} + \|T^{\mathsf T}\|_{\cb}) = \|T\|_{\cb}.
	\]
	Using Proposition \ref{prop:newexpressioncbnorm} we conclude that $\norm{p}_{\cb}=\norm{T_p}_{\cb}$.
\end{proof}
Considering bilinear forms gives the following corollary. 
\begin{corollary}\label{cor:cbbilinearforms}
	Let $p:\{-1,1\}^n\times\{-1,1\}^n\to \R$ be a bilinear form and let $A\in\mathbb{R}^{n\times n}$ be such that $p(x,y)=x^{\mathsf T}Ay$ for all $x,y\in \R^{n}$. 
	Then, $\norm{A}_{\cb}=\norm{p}_{\cb}$.
\end{corollary}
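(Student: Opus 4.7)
The plan is to combine Propositions~\ref{prop:newexpressioncbnorm} and \ref{prop: cb quadratic} to re-express $\|p\|_{\cb}$ in terms of contraction-valued maps over $[2n]$, and then to match the resulting extremal problem against the Grothendieck-type formula $\|A\|_{\cb} = \sup \sum_{i,j} A_{i,j}\langle u(i), v(j)\rangle$ (over unit-vector maps $u,v\colon[n]\to S^{d-1}$) recalled in the Introduction.

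For the upper bound $\|p\|_{\cb}\le \|A\|_{\cb}$, I would feed the non-symmetric witness tensor $T \in \R^{2n\times 2n}$ given by $T_{i,n+j}:=A_{i,j}$ for $i,j\in [n]$ and zeros elsewhere into Proposition~\ref{prop:newexpressioncbnorm}. Since $T(z)=x^{\mathsf T}Ay=p(z)$, one immediately gets $\|p\|_{\cb}\le \|T\|_{\cb}$. Evaluating $\|T\|_{\cb}$ via the definition with two independent contraction-valued maps $C,D\colon [2n]\to B_{M(d)}$ yields $\|T\|_{\cb}=\sup_{C,D}\|\sum_{i,j} A_{i,j}\, C(i) D(n+j)\|$, and since the indices $i$ and $n+j$ are disjoint, $C(i)$ and $D(n+j)$ may be chosen freely and independently, recovering exactly $\|A\|_{\cb}$.

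For the reverse inequality, by Proposition~\ref{prop: cb quadratic} it suffices to show $\|T_p\|_{\cb}\ge \|A\|_{\cb}$, where $T_p=\tfrac{1}{2}\begin{pmatrix}0&A\\A^{\mathsf T}&0\end{pmatrix}$. Using two contraction-valued maps,
\[
\|T_p\|_{\cb} \;=\; \sup_{C,D}\Bigl\|\tfrac{1}{2}\sum_{i,j}A_{i,j}\bigl[C(i)D(n+j)+C(n+j)D(i)\bigr]\Bigr\|.
\]
Given $\eps>0$, I would pick unit vectors $\xi_i,\eta_j\in S^{d-1}$ with $\sum_{i,j} A_{i,j}\langle \xi_i,\eta_j\rangle\ge \|A\|_{\cb}-\eps$ and define the rank-one contractions $C(i):=e_1\xi_i^{\mathsf T}$, $C(n+j):=e_1\eta_j^{\mathsf T}$, $D(i):=\xi_i e_1^{\mathsf T}$ and $D(n+j):=\eta_j e_1^{\mathsf T}$. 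A short calculation then gives both $C(i)D(n+j)=\langle \xi_i,\eta_j\rangle\, e_1 e_1^{\mathsf T}$ and $C(n+j)D(i)=\langle \xi_i,\eta_j\rangle\, e_1 e_1^{\mathsf T}$, so the two halves combine into $\sum_{i,j}A_{i,j}\langle \xi_i,\eta_j\rangle\, e_1 e_1^{\mathsf T}$, a rank-one matrix of norm at least $\|A\|_{\cb}-\eps$; sending $\eps\to 0$ concludes.

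The main obstacle is the factor of $\tfrac{1}{2}$ appearing in $T_p$: a naive direct-sum construction would place the two products $C(i)D(n+j)$ and $C(n+j)D(i)$ in orthogonal blocks, yielding only the maximum of two norms and hence losing a factor of $2$. The rank-one ansatz above sidesteps this by forcing both products to point in the common direction $e_1 e_1^{\mathsf T}$, so that the $\tfrac{1}{2}+\tfrac{1}{2}$ combines coherently and recovers $\|A\|_{\cb}$ exactly in the limit.
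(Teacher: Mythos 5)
Your proof is correct, and its second half takes a genuinely different route from the paper's. For the upper bound you plug the off-diagonal block tensor $T_{i,n+j}=A_{i,j}$ into Proposition~\ref{prop:newexpressioncbnorm} and read off $\norm{p}_{\cb}\le\norm{T}_{\cb}=\norm{A}_{\cb}$ directly, whereas the paper passes through the symmetric tensor $T_p=\tfrac12\bigl(\begin{smallmatrix}0 & A\\ A^{\mathsf T} & 0\end{smallmatrix}\bigr)$ and uses the triangle inequality together with $\norm{A^{\mathsf T}}_{\cb}=\norm{A}_{\cb}$; your version is slightly more economical. For the lower bound, the paper starts from arbitrary contraction-valued maps, rotates by a unitary so that the operator norm is witnessed by a single unit vector $u$ on both sides, and then exploits $\langle u,Mu\rangle=\langle u,M^{\mathsf T}u\rangle$ so that the two halves of $T_p$ add coherently; you instead build explicit rank-one contractions $e_1\xi_i^{\mathsf T}$, $\eta_je_1^{\mathsf T}$ from near-optimal Grothendieck vectors so that both cross terms point along $e_1e_1^{\mathsf T}$, which kills the factor $\tfrac12$ just as effectively and avoids the ``WLOG single vector'' step. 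One small point to make explicit: your lower bound lands on the vector formulation $\sup\sum_{i,j}A_{i,j}\langle\xi_i,\eta_j\rangle$ (the Grothendieck-type formula from the introduction), while the corollary's $\norm{A}_{\cb}$ is defined in the preliminaries via contraction-valued maps; the identification of the two is the standard observation that $\norm{\sum_{i,j}A_{i,j}C(i)D(j)}=\sup_{u,v}\sum_{i,j}A_{i,j}\langle C(i)^{\mathsf T}u,D(j)v\rangle$ with $C(i)^{\mathsf T}u$, $D(j)v$ in the unit ball (and conversely via your rank-one ansatz), which is exactly the direction you need and is worth a sentence in a full write-up. The paper's argument has the mild advantage of never invoking this vector reformulation, working with contractions throughout; yours buys a more concrete extremal witness and a cleaner explanation of why the $\tfrac12+\tfrac12$ recombines without loss.
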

\begin{proof}
	By \cref{prop: cb quadratic} we have $\|p\|_{\cb} = \|T_p\|_{\cb}$. Now observe that $T_p = \frac{1}{2} \begin{pmatrix} 0 & A \\ A^{\mathsf T} & 0 \end{pmatrix}$ and hence 
	\begin{align*}
		\|T_p\|_{\cb} &\leq \frac{1}{2} \left(\Big\|\begin{pmatrix} 0 & A \\ 0 & 0 \end{pmatrix}\Big\|_{\cb} + \Big\|\begin{pmatrix} 0 & 0 \\ A^{\mathsf T} & 0 \end{pmatrix}\Big\|_{\cb}\right) \\
		&\leq \frac{1}{2} \left(\|A\|_{\cb} + \|A^{\mathsf T}\|_{\cb}\right) \\
		&=\|A\|_{\cb},
	\end{align*}
	where the last equality uses \eqref{eq: cb of transpose}. 

	Conversely, let $B,C\colon[n]\to B_{M(d)}$. Note that $$\norm{\sum_{i,j\in [n]} A_{i,j}B(i)C(j)} = \sup_{u,v} \langle u, \sum_{i,j\in [n]} A_{i,j}B(i)C(j) v\rangle$$ where the supremum is taken over possibly distinct unit vectors $u$ and $v$. We argue that we can take $u=v$ without loss of generality. Indeed, let $U$ be a unitary matrix for which $U u = v$, then 
 \[
 \norm{\sum_{i,j\in [n]} A_{i,j}B(i)C(j) U} = \norm{\sum_{i,j\in [n]} A_{i,j}B(i)C(j)} = \langle u, \sum_{i,j\in [n]} A_{i,j}B(i)C(j) Uu\rangle.
 \]
 As we are optimizing over all $B,C\colon[n]\to B_{M(d)}$, replacing each $C(j)$ by $C(j) U$, we may assume that

	$$\norm{\sum_{i,j\in [n]} A_{i,j}B(i)C(j)} =\Big|\sum_{i,j\in [n]} A_{i,j}\langle u,B (i)C (j)u\rangle\Big|,$$ 
	for a single unit vector $u$. 
	 For every $i \in [n]$, define the following contractions:
	\[
	Q_1(i) = B (i) = Q_2^{\mathsf T}(i), \qquad Q_2(n+i) = C (i) = Q_{1}^{\mathsf T}(n+i).
	\]
	Then,
	\begin{align*}
		\|T_p\|_{\cb} &\geq \Big\|\sum_{i,j\in[n]}  \frac{1}{2}\big(A_{i,j} Q_1(i) Q_2(n+j) + A_{j,i} Q_1(n+i) Q_2(j)\big) \Big\| \\ 
		&= \Big|\sum_{i,j\in[n]} A_{i,j} \langle u,\frac{B(i)C(j) + C^{\mathsf T}(j)B^{\mathsf T}(i)}{2} u\rangle \Big| \\
		&=\Big|\sum_{i,j\in [n]} A_{i,j} \langle u,B(i)C (j)u\rangle \Big|
	\end{align*}
	Taking the supremum over $B,C\colon[n]\to B_{M(d)}$ shows that $\|T_p\|_{\cb} \geq \|A\|_{\cb}$.
\end{proof}

We recall that it was shown in \cite{Aaronson2015PolynomialsQQ} that for every bilinear form there exists a 1-query quantum algorithm that makes additive error at most $1-1/K_G$. It thus remains to show the lower bound. 
\characterizingKG*
\begin{proof} 
\cref{theo:lowerSDP} shows the following:
\begin{align}\label{eq:almostdone}
		\sup_{p\in\mathcal{BB}}\mathcal{E}(p,1)&=\sup_{\norm{p}_\infty\leq 1}\sup_{\norm{r}_{\infty,*}\leq 1}\langle p,r\rangle-\norm{r}_{\cb,*}\\
		&=\sup_{\norm{r}_{\infty,*}\leq 1}\norm{r}_{\infty,*}-\norm{r}_{\cb,*}\\
		&=\sup_{\norm{r}_{\infty,*}= 1}1-\norm{r}_{\cb,*}.\nonumber
\end{align}
It thus remains to show that for bilinear forms  $\|r\|_{\infty,*} \leq K_G \|r\|_{\cb,*}$. We do so starting from Grothendieck's theorem for matrices. It states that for $A \in \R^{n \times n}$ we have $\|A\|_{\cb} \leq K_G \|A\|_{\infty}$. Each bilinear form $q:\{-1,1\}^n \times \{-1,1\}^n \to \R$ uniquely corresponds to a matrix $A \in \R^{n \times n}$ such that $q(x,y) = x^{\mathsf T}Ay$. Moreover, for such $q$ and $A$ one has $\|q\|_\infty = \|A\|_{\infty}$ (immediate) and in \cref{cor:cbbilinearforms} we showed $\|q\|_{\cb} = \|A\|_{\cb}$, so $\norm{q}_\cb\leq K_G \norm{q}_\infty$. A duality argument then concludes the proof:
\[
\|r\|_{\infty,*} = \sup_{\|q\|_{\infty} \leq 1} \langle r,q \rangle \leq \sup_{\|q\|_{\cb} \leq K_G} \langle r,q\rangle = K_G \|r\|_{\cb,*}. 
\]
\end{proof}
\begin{remark}
    If in \cref{theo:characterizingKG} we restrict the supremum to bilinear forms on $n+n$ variables, for a fixed $n$, then we obtain a characterization of $K_G(n)$ instead of $K_G$. Here, $K_G(n)=\sup\norm{A}_{\cb}/\norm{A}_{\infty}$, where the supremum is taken over all non-zero $n\times n$ real matrices.
\end{remark}

\subsection{No converse for the polynomial method}\label{sec:NoAdditiveConverse}
In this section we show that there is no additive nor multiplicative converse for polynomials of degree 4 and 2-query algorithms. In other words, we will prove \cref{theo:NoAdditiveConverse,theo:NoMultConverse}.
Before doing that, we explain what was the error in the proof of \cref{theo:NoMultConverse} given in \cite{QQA=CBF}. 

Their proof arrives at the equation 
\begin{equation}\label{eq1}
    \sum_{\alpha,\beta\in\{0,1,2,3,4\}^{n}: |\alpha|+|\beta|=4}d'_{\alpha,\beta}x^\alpha=C\sum_{\alpha\in\{0,1\}^{n}: |\alpha|=4}d_\alpha x^\alpha\ \ \forall\ x\in\{-1,1\}^{n},
\end{equation}
where $d'_{\alpha,\beta}$, $d_\alpha$ and $C$ are some real numbers, $x^\alpha$ stands for $\prod_{i=1}^n x_i^{\alpha_i}$ and $|\alpha|$ for $\sum_{i=1}^n\alpha_i$. 
It follows from the orthogonality of the characters that $d'_{\alpha,0} = Cd_\alpha$ for all $\alpha\in\{0,1\}^n$ such that $|\alpha| = 4$.
What is used, however, is that $d'_{\alpha,0} = Cd_\alpha$ for all $\alpha\in\{0,1,2,3,4\}^n$ such that $|\alpha| = 4$, which is not true in general. 
For instance if $n=1$, $C=1$ and $d'_{(2,0),(0,2)}=-d'_{(0,0),(4,0)}=1$ and the rest of the coefficients set to~$0$, then~\eqref{eq1} becomes $x^2-1=0,\ \forall\ x\in\{-1,1\}$. 

We now prove that there is no additive converse, from which the non-multiplicative converse result quickly follows.

\NoAdditiveConverse*
\begin{proof} 
	For any partition $\mathcal P$ of $\{0\}\cup [3n]$ in $2t$ subsets, \cref{theo:lowerSDP} shows that 
	\begin{align*}
		\sup_{p\in V_{\mathcal{P}},\norm{p}_\infty\leq 1}\mathcal{E}(p,t) &=\sup_{p\in V_{\mathcal{P}},\norm{p}_\infty\leq 1}\sup_{r\in V_{\mathcal{P}},\norm{r}_{\infty,*}\leq 1}\langle p,r\rangle-\norm{r}_{\cb,*}\\
		&=\sup_{r\in V_{\mathcal{P}},\norm{r}_{\infty,*}\leq 1}\norm{r}_{\infty,*}-\norm{r}_{\cb,*}\\
		&=\sup_{r\in V_{\mathcal{P}},\norm{r}_{\infty,*}= 1}1-\norm{r}_{\cb,*}.\nonumber
	\end{align*}
	Consider now the case $t=2$ and the partition $\mathcal{P}_n=\{\{0\},\{1,\dots,n\},\{n+1,\dots,2n\},\{2n+1,\dots,3n\}\}$ of $\{0\}\cup[3n]$. 
	In \cref{theo:separations} a sequence of forms $p_n\in V_{\mathcal{P}_n}$ was constructed with the property that 
	 \begin{equation}\label{eq:cbinftyquotient}
	 	\frac{\norm{p_n}_{\cb}}{\norm{p_n}_\infty}\to \infty.
	 \end{equation}
	 Hence, by a duality argument we get that there is a sequence $r_n\in V_{\mathcal{P}_n}$
	 such that $\norm{r_n}_{\cb,*}/\norm{r_n}_{\infty,*}\to 0$. Indeed, suppose towards a contradiction that there is a $K>0$ such that for every $n\in\mathbb{N}$ and every $r\in V_{\mathcal{P}_n}$ we have that $\norm{r}_{\cb,*}\geq K\norm{r}_{\infty,*}$. Then, $$\norm{p}_{\cb}=\sup_{\norm{r}_{\cb,*}\leq 1}\langle r,p\rangle\leq\frac{1}{K}\sup_{\norm{r}_{\infty,*}\leq 1}\langle r,p\rangle=\frac{1}{K}\norm{p}_\infty,$$ which contradicts \cref{eq:cbinftyquotient}. The sequence $r_n$ shows that 
	  $$\sup_{p\in V_{\mathcal{P}_n},\norm{p}_\infty\leq 1,n\in\mathbb{N}}\mathcal{E}(p,2)=1,$$ which implies the stated result.
\end{proof}
\NoMultConverse*

\begin{proof}
    First note that we can assume $C\leq 1$, because  $|\mathbb E[\mathcal{A}(x)]|\leq 1$ for any algorithm $\mathcal{A}$ and any $x\in \{-1,1\}^n$. Assume that there exists $0<C\leq 1$ such that for every bounded $p$ of degree 4 there is a 2-query algorithm $\mathcal{A}$ with $\mathbb E[\mathcal{ A}(x)]=p(x)$ for every $x\in\{-1,1\}^n$. We claim that that $\mathcal{A}$ approximates $p$ up to an additive error $1-1/C$, which contradicts \cref{theo:NoAdditiveConverse}. Indeed, 
    \[
        |p(x)-\mathbb E[\mathcal{A}(x)]|=|p(x)(1-C)|\leq 1-C. 
    \]
\end{proof}

\section{An open question}\label{sec:question}
Let $\mathcal{P}$ be a partition of $[n]$ in $2t$ subsets and let $p\in V_\mathcal{P}$ with $\norm{p}_\infty\leq 1$.
From the characterization of quantum $t$-query algorithms of \cite{QQA=CBF} we know that there is a quantum query algorithm~$\mathcal A$ that outputs $p/\norm{p}_{\cb}$ on expectation. 
 In particular, 
 $$|\mathbb{E}[\mathcal{A}(x)]-p(x)|=\left|\frac{p(x)}{\norm{p}_{\cb}}-p(x)\right|\leq \norm{p}_{\infty}\Big(1-\frac{1}{\norm{p}_{\cb}}\Big).$$
As a consequence, one has that 
\begin{equation}\label{eq:almostGT}
	\mathcal{E}(p,t)\leq \norm{p}_{\infty}\Big(1-\frac{1}{\norm{p}_{\cb}}\Big).
\end{equation}
Our \cref{theo:lowerSDP} implies that both sides of Equation \eqref{eq:almostGT} are equal when you take the supremum over all $p\in V_\mathcal{P}$. We wonder if that is true for every $p\in V_\mathcal{P}$. 
\begin{question}\label{que:openquestion}
	Let $\mathcal{P}$ be a partition of $[n]$ in $2t$ subsets and let $p\in V_\mathcal{P}$. Is it true that 
	$$	\mathcal{E}(p,t)= \norm{p}_{\infty}\Big(1-\frac{1}{\norm{p}_{\cb}}\Big)?$$
\end{question}

A positive answer to this question would strengthen our main technical contribution, Theorem~\ref{theo:lowerSDP}. 
 In addition, if we focus on the case $t=1$, it would imply that the method proposed in \cite{Aaronson2015PolynomialsQQ} to give an algorithm that computes $p/\norm{p}_{\cb}$ with $1$ query, provides the best $1$ query approximation for every $p$ (here the best means the one that minimizes $\mathcal{E}(p,1)$). For $t\geq 1$ it was showed in \cite{QQA=CBF} that $p/\norm{p}_\cb$ is the output of a $t$ query algorithm, which would also be the best possible $t$ query approximation for $p$.
 
 Finally, for the case $t=1$, it would imply a clean link between the biases of two player XOR games and quantum query algorithms. Indeed, given a matrix $A\in\mathbb{R}^{n\times n}$ it both defines a bounded bilinear form $p_A(x,y)=x^{\mathsf T} A y $ and a two player XOR game $G_A$, where the referee asks the pair of questions $(i,j)$ with probability $$\pi(i,j)=\frac{|A_{i,j}|}{{\sum_{i,j\in [n]}|A_{i,j}|}}$$ and the payoff is given by $$\mu(i,j,a,b)=\frac{1+ab\cdot\mathrm{sgn}[A_{i,j}]}{2}.$$ 
\cref{cor:cbbilinearforms} states that $$\norm{p_A}_\infty=\norm{A}_\infty\ \mathrm{and}\ \norm{p_A}_{\cb}=\norm{A}_{\cb},$$
while Tsirelson's work \cite{Tsirelson} implies that the classical and quantum biases of $G_A$ are
$$\beta(G_A)=\norm{A}_\infty\ \mathrm{and}\ \beta^*(G_A)=\norm{A}_{\cb}.$$
Thus, a positive answer to \cref{que:openquestion} would imply that 
$$\mathcal{E}(p_A,1)=\beta(G_A)\Big(1-\frac{1}{\beta^*(G_A)}\Big).$$

\section{An approach via duality}\label{sec:dualnorms}


In this section we give an alternative proof of our main result from the dual picture. We stress that our main result gives a formula for $\err(p,t)$ in terms of certain dual norms, but it is not clear yet how to compute them. Along the way to the alternative proof of our main result, we will give formulations of the dual norms as efficiently solvable convex optimization problems. In particular, we will show that~$\|p\|_{\cb,*}$ can be computed using a semidefinite program and~$\|p\|_{\infty,*}$ via a linear program. The semidefinite programming formulation of~$\|p\|_{\cb,*}$ essentially follows from a semidefinite programming formulation for the completely bounded norm of a tensor that was provided in \cite{gribling2019semidefinite}. However, proving this, and using it to provide an alternative proof of \cref{theo:generalexpressionforE}, requires us to further develop some of the theory of completely bounded norms of polynomials.

\subsection{How to compute the dual norms} 
First, we give a convenient formula for the dual of the completely bounded norm of a tensor with respect to the inner product given by 
$$\langle T,R\rangle=\sum_{\ind{i}\in[n]^t}T_{\ind{i}}R_{\ind{i}}.$$ To state it in an compact way, we introduce the following subset of $t$-tensors in~$\R^{n\times\cdots\times n}$:
\begin{equation} \label{eq:Knt}
\begin{split}
	K(n,t) := \{ \langle u,A(\ind{i})v\rangle_{\ind{i}\in[n]^t} \st  d\in\mathbb{N},\ u,v\in S^{d-1},\ A\colon [n] \to B_{M(d)} \}.
\end{split}
\end{equation}

\begin{proposition}\label{prop:Tcb*}
	Let $R\in \mathbb{R}^{n\times \dots\times n}$ be a $t$-tensor. Then, \begin{align} \label{eq:dualityTcb}
		\|R\|_{\cb,*} = \inf\{w>0 \st R\in wK(n,t)\}.
	\end{align}
\end{proposition}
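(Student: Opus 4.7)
The plan is to identify $K(n,t)$ itself as the unit ball of $\|\cdot\|_{\cb,*}$, from which the Minkowski-gauge expression $\|R\|_{\cb,*} = \inf\{w > 0 \st R \in wK(n,t)\}$ is immediate. My starting point will be to rewrite the completely bounded norm as a support function over $K(n,t)$: combining \cref{prop:polarizationTensors} with the variational formula $\|M\| = \sup_{u,v \in S^{d-1}} \langle u, Mv\rangle$ (the absolute value can be dropped because $K(n,t)$ is invariant under $u \mapsto -u$), one obtains
\[
\|T\|_{\cb} = \sup_{K \in K(n,t)} \langle T, K\rangle.
\]
Thus, provided $K(n,t)$ is closed, convex, symmetric, and contains $0$, the bipolar theorem will give the identification I want.

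I would next verify the structural properties. Convexity comes from a direct-sum construction: if $K_i$ is realized by $(u_i, v_i, A_i)$ in dimension $d_i$, then $\lambda K_1 + (1-\lambda) K_2$ is realized in dimension $d_1+d_2$ by $A(i) = A_1(i) \oplus A_2(i)$, $u = (\sqrt{\lambda}\,u_1,\ \sqrt{1-\lambda}\,u_2)$ and $v = (\sqrt{\lambda}\,v_1,\ \sqrt{1-\lambda}\,v_2)$. Symmetry follows from flipping the sign of $u$, and $0 \in K(n,t)$ by picking $u \perp v$.

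The hard part will be closedness, which is not automatic: $K(n,t)$ is a priori the union over all dimensions $d$ of the compact sets obtained by fixing $d$, and in a finite-dimensional ambient space such a nested union need not be closed. I plan to overcome this by a dimension reduction. Given $K \in K(n,t)$ witnessed by $(u,v,A)$ in some dimension $d$, let $V := \vspan\{A(\ind j) v \st 0 \leq s \leq t,\ \ind j \in [n]^s\}$, a subspace of dimension at most $\sum_{s=0}^{t} n^s$. By construction $A(\ind j) v \in V$ for every $\ind j$ of length $\leq t$, so compressing to $V$ leaves the partial products unchanged: replacing $u$ by $P_V u$ and each $A(i)$ by $P_V A(i)|_V$ produces the same tensor $K$. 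Adding one auxiliary coordinate in order to renormalize $P_V u$ to a unit vector then realizes $K$ in dimension $D := \sum_{s=0}^{t} n^s + 1$. Hence $K(n,t)$ equals the continuous image of a single compact set and is in particular compact.

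With closedness in hand I would finish by a Hahn-Banach separation argument. The inequality $\|R\|_{\cb,*} \leq \inf\{w \st R \in wK(n,t)\}$ is direct: if $R = wK$ with $K \in K(n,t)$, then every $T$ with $\|T\|_{\cb} \leq 1$ satisfies $\langle R,T\rangle = w\langle K,T\rangle \leq w\|T\|_{\cb} \leq w$ by the reformulation of $\|T\|_{\cb}$. For the reverse inequality I would assume $R \notin wK(n,t)$ and apply Hahn-Banach separation to the closed, convex, symmetric set $wK(n,t)$ and the point $R$, producing a tensor $T$ with $\langle T,R\rangle > \sup_{R' \in wK(n,t)} \langle T,R'\rangle = w\|T\|_{\cb}$, which forces $\|R\|_{\cb,*} > w$. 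The dimension-reduction step is the only genuine obstacle; the remainder of the proof is routine convex duality.
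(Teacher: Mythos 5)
Your proposal is correct, but it takes a genuinely different route from the paper. The paper never proves that $K(n,t)$ is closed: it defines the gauge $\vertiii{R}=\inf\{w>0 \st R\in wK(n,t)\}$, checks directly that this is a norm (well-definedness via the explicit realization of standard basis tensors, the triangle inequality via the same direct-sum construction you use for convexity, plus homogeneity and definiteness), observes that its dual norm is exactly $\norm{\cdot}_{\cb}$ as a support function over $K(n,t)$, and then concludes by finite-dimensional biduality $\vertiii{\cdot}=\vertiii{\cdot}_{**}=\norm{\cdot}_{\cb,*}$; this is insensitive to whether the infimum in the gauge is attained. You instead identify $K(n,t)$ outright as the unit ball of $\norm{\cdot}_{\cb,*}$ via the bipolar theorem, for which closedness is essential, and you supply it through the compression argument: restricting to $V=\vspan\{A(\ind j)v \st 0\leq s\leq t\}$, noting $B(\ind i)v=A(\ind i)v$ for all words of length at most $t$ and $\langle P_Vu,A(\ind i)v\rangle=\langle u,A(\ind i)v\rangle$ since $A(\ind i)v\in V$, and padding one coordinate to renormalize $P_Vu$ — this correctly realizes every element of $K(n,t)$ in a fixed dimension $D=1+\sum_{s=0}^{t}n^s$, so $K(n,t)$ is the continuous image of a compact set and the separation argument goes through. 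What your route buys is slightly more than the stated proposition: $K(n,t)$ is literally the closed unit ball of $\norm{\cdot}_{\cb,*}$ and the infimum is attained (a fact the appendix implicitly exploits when bounding $\norm{r_n}_{\cb,*}\leq 1$ from an explicit witness), at the price of the dimension-reduction lemma; the paper's route is shorter because biduality lets it bypass compactness entirely. The only step you should make explicit is the (easy) fact that $\norm{\cdot}_{\cb}$ is definite on tensors, so that its dual is finite and the normalization $T/\norm{T}_{\cb}$ in your separation step is legitimate; this follows, e.g., from the same basis-element realization the paper uses.
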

\begin{remark}
	Note that \cref{prop:Tcb*} states that $\norm{R}_{\cb,*}$ is the Minkowski norm defined by $K(n,t)$, so $K(n,t)$ is the unit ball of $(\mathbb{R}^{n\times \dots\times n},\norm{\cdot}_{\cb,*})$.
\end{remark}
\begin{remark}
     \cref{prop:Tcb*} allows us to compute $\norm{R}_{\cb,*}$ as a SDP. Indeed, it follows from \cref{eq:dualityTcb} that 
    \begin{align*}
        \norm{R}_{\cb,*}=\ \ &\inf && w \nonumber\\
                        &\  \mathrm{s.t.} && w\in\mathbb R_{>0},\, d\in\mathbb{N},\,  u,v\in wS^{d-1},\ A\colon [n] \to B_{M(d)},\nonumber\\  & && 
                        R = \langle u,A(\ind{i})v\rangle_{\ind{i}\in[n]^t},\  
    \end{align*}    
    and in \cite{gribling2019semidefinite} it was shown (implicitly) that the constraints can be written as linear equations on the entries of a positive semidefinite matrix. 
\end{remark}
\begin{proof}[ of \cref{prop:Tcb*}]
	Let $\vertiii{R}$ be the expression in the right-hand side of \cref{eq:dualityTcb}. First, we show that $\vertiii{\cdot}$ is a norm. We should check that $\vertiii{R}$ is well-defined, i.e., that every tensor can be decomposed as $\langle u, A(\ind{i})v\rangle$. We observe that the standard basis elements for the space of $t$-tensors are contained in $K(n,t)$. Indeed, let
	$u:=e_1$, $v:=e_{t+1}$ and $A(i)=\sum_{s\in\ind{i}^{-1}(i)} e_s e_{s+1}^\mathsf{T}$ then 
	\[
	\langle u, A(\ind{j})v\rangle = \begin{cases} 1 &\text{ if } \ind j = \ind i,\\ 0 &\text{ otherwise.} \end{cases}
	\] 
	To conclude that $\vertiii{R}$ is well-defined it then suffices to observe that the set of scalar multiples of elements in $K(n,t)$ is closed under addition. Indeed, if 
	\begin{align*}
		R_{\ind{i}}=\langle u,A(\ind{i})v\rangle\ \mathrm{and}\ \tilde{R}_{\ind{i}}=\langle \tilde{u},\tilde{A}(\ind{i})\tilde{v}\rangle,
	\end{align*} 
	for some $u,v,\tilde{u},\tilde{v}\in\mathbb{R}^d$ with $\norm{u}^2=\norm{v}^2=w$, $\norm{\tilde{u}}^2=\norm{\tilde{v}}^2=\tilde{w}$ and maps $A,\tilde{A}\colon[n]\to B_{M(d)},$ then 
	\begin{equation*}
		R_\ind{i}+\tilde{R}_{\ind{i}}=\langle \hat{u},\hat{A}(\ind{i})\hat{v}\rangle,
	\end{equation*}
	where $\hat{u},\hat{v}\in\mathbb{R}^{2d}$ are the vectors with $\norm{\hat{u}}^2=\norm{\hat{v}}^2=w+\tilde{w}$ defined by $\hat{u}:= u \oplus \tilde{u}$, $v:= v \oplus \tilde{v}$ and the map $\hat{A}\colon [n] \to B_{M(2d)}$ is defined via 
	$$\hat{A}(i)=\begin{pmatrix}
		A(i) & 0 \\ 0 &\tilde{A}(i)
	\end{pmatrix}.$$
	This construction also shows that $\vertiii{\cdot}$ satisfies the triangle inequality. It is also clear that $\vertiii{\cdot}$ is homogeneous and that $\vertiii{R}=0$ if and only if $R=0$, so $\vertiii{\cdot}$ is a norm. 

	Finally, note that the completely bounded norm of a $t$-tensor $R\in \mathbb{R}^{n\times \dots\times n}$ is given by $$\norm{T}_{\cb}=\sup\Big\{\big|\sum_{\ind{i}\in [n]^t}T_{\ind{i}}\langle u,A(\ind{i})v\rangle\big|  \st d \in \N,\  u,v \in S^{d-1},\  A\colon[n] \to B_{M(d)}\Big\},$$
	so $\norm{T}_{\cb}=\vertiii{T}_{*}$, and by the fact that the dual of the dual norm is the primal norm for finite-dimensional normed spaces, we conclude that $$\vertiii{\cdot}=\vertiii{\cdot}_{**}=\norm{\cdot}_{\cb,*}.$$
\end{proof}


\begin{proposition}\label{prop:dualityinfty}
		Let $\mathcal{P}$ be a partition of $[n]$ in $t$ subsets and $p\in V_{\mathcal P}$. Then,
		\begin{equation}\label{eq:dualityinfty}
			\|p\|_{\infty,*} = 	\inf\{\|r\|_1\st\, r:\{-1,1\}^n\to \mathbb{R},\, r\in W_{\mathcal{P}},\, r_{=t} = p\big\}.
		\end{equation}
\end{proposition}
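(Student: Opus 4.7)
The plan is to use Hahn--Banach duality to express $\|p\|_{\infty,*}$ as an infimum of $\|r\|_1$ over all functions $r$ representing the same linear functional on $V_{\mathcal P}$ as $p$ does, and then use the projection $\Pi_{\mathcal P}$ of \cref{cor:projections} to restrict this infimum to $W_{\mathcal P}$.

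For the first step, I view $V_{\mathcal P}$ as a subspace of the finite-dimensional space $L^\infty(\{-1,1\}^n)$, and extend the bounded linear functional $q \mapsto \langle p, q\rangle$ to $L^\infty(\{-1,1\}^n)$ with the same norm using Hahn--Banach. Since the dual of $L^\infty(\{-1,1\}^n)$ under the uniform-measure inner product is $L^1(\{-1,1\}^n)$, the extensions are parametrized by functions $r:\{-1,1\}^n \to \R$ with dual norm $\|r\|_1$. This yields
\[
\|p\|_{\infty,*} = \inf\bigl\{\|r\|_1 : r : \{-1,1\}^n \to \R,\ \langle r, q\rangle = \langle p, q\rangle \ \forall q \in V_{\mathcal P}\bigr\}.
\]
Writing this in Fourier terms, and using that $V_{\mathcal P}$ has basis $\{\chi_S : |S \cap I_j| = 1 \text{ for all } j\}$, the constraint is equivalent to $\hat r(S) = \hat p(S)$ for every $S$ meeting each block $I_j$ in exactly one element.

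For the second step, I show the above infimum equals the one in the proposition. The inequality $\leq$ is immediate since any $r \in W_{\mathcal P}$ with $r_{=t} = p$ satisfies the Fourier condition above. For the reverse inequality, given any admissible $r$, set $r' := \Pi_{\mathcal P} r$. Then $r' \in W_{\mathcal P}$ by construction, and \cref{cor:projections} gives $\|r'\|_1 \leq \|r\|_1$. It remains to check $r'_{=t} = p$. This hinges on the small combinatorial observation that $t$ positive odd integers summing to $t$ must all equal $1$: for any $|S| = t$, either $|S \cap I_j|$ is odd for every $j$, in which case $|S \cap I_j| = 1$ for all $j$ and hence $\hat{r'}(S) = \hat r(S) = \hat p(S)$; or some $|S \cap I_j|$ is even, in which case $\hat{r'}(S) = 0 = \hat p(S)$ (the latter because $p \in V_{\mathcal P}$).

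The only step requiring any real care is the last one, namely verifying that after projection the degree-$t$ homogeneous component is exactly $p$; the rest of the argument is essentially the standard duality between a subspace of a normed space and a quotient of its dual.
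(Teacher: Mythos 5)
Your proof is correct and takes essentially the same route as the paper: a finite-dimensional $\ell_\infty$--$\ell_1$ duality step (the paper phrases it as Lagrange duality against all degree-$t$ forms and then projects the dual variable $q$, you phrase it as Hahn--Banach extension against $V_{\mathcal P}$ directly --- the same duality), followed by projecting the primal witness $r$ with $\Pi_{\mathcal P}$ and invoking \cref{cor:projections} for $\norm{\cdot}_1$. Your explicit parity check that $(\Pi_{\mathcal P}r)_{=t}=p$ (odd block-degrees summing to $t$ over $t$ blocks forces all of them to equal $1$) is exactly the detail the paper leaves implicit, and it is correct.
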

\begin{remark}
    \cref{eq:dualityinfty} can be phrased as a linear program, so it provides an efficient way of computing $\norm{p}_{\infty,*}$.
\end{remark}
\begin{proof}
    We prove the statement in a few steps:
    \begin{align*}
        \norm{p}_{\infty,*}&=\sup\big\{\langle p,q\rangle \st q\in W_\mathcal{P},\, \|q\|_{\infty} \leq 1\} \notag\\
        &=\sup\big\{\langle p,q\rangle \st q\in \mathbb{R}[x_1,\dots,x_n]_{=t},\, \|q\|_{\infty} \leq 1\} \notag\\
        &=\inf\big\{\|r\|_1 \st r:\{-1,1\}^n\to \mathbb{R},\,  r_{=t}= p\big\}\\
        &=\inf\big\{\|r\|_1 \st r:\{-1,1\}^n\to \mathbb{R},\, r\in W_p,\,  r_{=t}= p\big\},
    \end{align*}
    where the first equality is the definition, in the second equality we have used  \cref{cor:projections} with $\norm{\cdot}_{\infty}$ to remove the condition $q\in W_{\mathcal{P}}$, the third equality follows from Lagrange duality (cf.~\cite[Sec.~5.1.6]{boyd_vandenberghe_2004}), and the fourth from \cref{cor:projections} for $\norm{\cdot}_1$.
\end{proof}

We have already seen in \cref{ex:p1neqpinfty*} that $\|p\|_{\infty,*} \neq \|p\|_1$ in general, 
because we are taking the dual norm with respect to $V_\mathcal{P}$. For completeness we give an alternative proof of the separation using \cref{prop:dualityinfty}. 
\begin{example}
The upper bound  $\norm{p}_{\infty,*}\leq 1/3$ of Example \ref{ex:p1neqpinfty*}  follows from \cref{prop:dualityinfty} by considering the multilinear map $r(x)=(x_1+x_2+x_3+x_1x_2x_3)/3$ that belongs to $W_\mathcal{P}$, satisfies $r_{=1}(x)=p(x)=(x_1+x_2+x_3)/3$ and $\norm{r}_1= 1/3$.
\end{example}

\begin{proposition}\label{prop:dualitycb}
	Let $\mathcal{P}$ be a partition of $[n]$ in $t$ subsets and let $p\in V_{\mathcal P}$. Then,
	\begin{align} \label{eq:dualitypcb}
		\|p\|_{\cb,*} =t!\norm{T_p}_{\cb,*}.
	\end{align}
\end{proposition}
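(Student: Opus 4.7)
My plan is to unfold the definition of $\|\cdot\|_{\cb,*}$ through \cref{prop:newexpressioncbnorm}, reduce the objective to a tensor inner product with $T_p$, and then use the tensor projection $\Pi_{\mathcal P}$ to drop the block-multilinearity constraint on the witness tensor.

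First, by the definition of the dual norm on $V_{\mathcal P}$ together with \cref{prop:newexpressioncbnorm},
\[
\|p\|_{\cb,*} = \sup\!\left\{\frac{\langle p, q\rangle}{\|q\|_{\cb}} : q \in V_{\mathcal P} \setminus \{0\}\right\} = \sup\!\left\{\frac{\langle p, T(x)\rangle}{\|T\|_{\cb}} : T \neq 0,\; T(x) \in V_{\mathcal P}\right\},
\]
where the second equality uses that each $q \in V_{\mathcal P}$ admits a tensor representation $T$ with $T(x)=q(x)$ and $\|T\|_{\cb}$ arbitrarily close to $\|q\|_{\cb}$, and conversely every tensor with $T(x)\in V_{\mathcal P}$ realizes a feasible polynomial $q = T(x)$.

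Next I will establish that $\langle p, T(x)\rangle = t!\,\langle T_p, T\rangle$ for every $t$-tensor $T$. Writing $T(x) = \sum_\alpha \bigl(\sum_{\ind i \in \mathcal I_\alpha} T_{\ind i}\bigr) x^\alpha$, the polynomial inner product equals $\sum_{\ind i} c_{e_{i_1}+\cdots+e_{i_t}}\, T_{\ind i}$. Because $p \in V_{\mathcal P}$, its coefficients $c_\alpha$ vanish unless $\alpha$ is block-multilinear; for such $\alpha$ the entries of any $\ind i \in \mathcal I_\alpha$ are pairwise distinct, so $\tau(\ind i)=t!$ and formula~\eqref{eq:fromCjtoTp} gives $(T_p)_{\ind i} = c_\alpha/t!$ on the $t!$ permutation indices (and $(T_p)_{\ind i}=0$ elsewhere); summing over $\ind i$ produces exactly the asserted factor $t!$.

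Finally, I will remove the constraint $T(x) \in V_{\mathcal P}$ using the tensor projection $\Pi_{\mathcal P}$ of~\eqref{eq:projectionOfTensors}. Since $T_p$ is supported on those $\ind i$ that hit each block of $\mathcal P$ exactly once, $\Pi_{\mathcal P} T_p = T_p$, hence $\langle T_p, T\rangle = \langle T_p, \Pi_{\mathcal P} T\rangle$ for every $T$. Moreover $\Pi_{\mathcal P}T(x)$ belongs to $W_{\mathcal P}$ and has degree at most $t$, and odd degree in each of $t$ disjoint blocks with total degree $t$ forces exactly one variable per block, so $\Pi_{\mathcal P}T(x)\in V_{\mathcal P}$. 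Coupled with $\|\Pi_{\mathcal P}T\|_{\cb}\le \|T\|_{\cb}$ (which is proven for tensors inside \cref{cor:projections}), this means that replacing $T$ by $\Pi_{\mathcal P}T$ in the supremum can only increase the quotient, so the constrained supremum equals the unconstrained supremum $\|T_p\|_{\cb,*}$. Together with the factor $t!$ from the previous paragraph this yields $\|p\|_{\cb,*} = t!\,\|T_p\|_{\cb,*}$. The main obstacle is the bookkeeping behind the identity $\langle p, T(x)\rangle = t!\,\langle T_p, T\rangle$: one has to track how the symmetrization of $T_p$ spreads each block-multilinear coefficient $c_\alpha$ over the $t!$ block-permutation entries, and recognize that this is exactly the factor converting the polynomial pairing into the tensor pairing; once this identity is clean, the rest is a short duality argument built on the already-established properties of $\Pi_{\mathcal P}$.
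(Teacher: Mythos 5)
Your proof is correct and takes essentially the same route as the paper's: you unfold $\|p\|_{\cb,*}$ via \cref{prop:newexpressioncbnorm}, convert the pairing $\langle p, T(x)\rangle$ into $t!\,\langle T_p,T\rangle$ using \eqref{eq:fromCjtoTp} with $\tau(\ind{i})=t!$ on the block-multilinear support, and then use $\Pi_{\mathcal P}$ (its $\cb$-contractivity on tensors, $\Pi_{\mathcal P}T(x)\in V_{\mathcal P}$, and $\langle T_p,T\rangle=\langle T_p,\Pi_{\mathcal P}T\rangle$) to show the constrained supremum equals the unconstrained one, exactly as in the paper.
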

\begin{proof}
	By duality and definition of $\norm{\cdot}_{\cb}$, we have that 
	\begin{align*}
		\norm{p}_{\cb,*}&=\sup\{ \sum_{\alpha\in\mathbb{Z}_{\geq 0}^n}c_\alpha c_\alpha': q\in V_{\mathcal P},\, \norm{q}_{\cb}\leq 1\}\\
		&=\sup\{ \sum_{\alpha\in\mathbb{Z}_{\geq 0}^n}c_\alpha \sum_{\ind{i}\in\mathcal{I}_\alpha}T_{\ind{i}}:\, q\in V_{\mathcal P},\, T(x)=q(x),\, \norm{T}_{\cb}\leq 1\},
	\end{align*}
	where $c_\alpha$ and $c_\alpha'$ are the coefficients of $p$ and $q$, respectively. Now, let $R\in\mathbb{R}^{n\times \dots\times n}$ be a $t$-tensor such that $R(x)=p(x)$ for every $x\in\mathbb{R}^n$. 
	Then we have, using \cref{eq:fromTensorsToCoefficients}, that 
	\begin{align*} 
		\norm{p}_{\cb,*}&=\sup\{\sum_{\alpha\in\mathbb{Z}_{\geq 0}^n} \sum_{\ind{j}\in\mathcal{I}_\alpha}R_\ind{j}\sum_{\ind{i}\in\mathcal{I}_\alpha}T_{\ind{i}}: q\in V_{\mathcal P},\, T(x)=q(x),\, \norm{T}_{\cb}\leq 1\} .
	\end{align*}
	In particular, if we choose $R$ to be $T_p$, then we have 
	\begin{align*} 
		\norm{p}_{\cb,*}&=t! \sup\{\sum_{\alpha\in\mathbb{Z}_{\geq 0}^n} \sum_{\ind{i}\in\mathcal{I}_\alpha}(T_p)_{\ind i} T_{\ind{i}}: q\in V_{\mathcal P},\, T(x)=q(x),\, \norm{T}_{\cb}\leq 1\}	\\
		&=t!\sup\{ \langle T_p,T\rangle: q\in V_{\mathcal P},\, T(x)=q(x),\, \norm{T}_{\cb}\leq 1\}.
	\end{align*} 
	 We now show that the expression on the right equals $t!$ times $\norm{T_p}_{\cb,*}$, which we recall can be written as 
	\begin{align}\label{eq:Tcb*}
		\norm{T_p}_{\cb,*}=\sup\{ \langle T_p,T\rangle: \norm{T}_{\cb}\leq 1\}.
	\end{align}
	By inclusion of the feasible region we have that $\norm{p}_{\cb,*}\leq t!\norm{T_p}_{\cb,*} $. For the other inequality, let $T\in\mathbb{R}^{n\times \dots\times n}$ be a $t$-tensor and consider $\Pi_\mathcal{P} T$ as in \cref{eq:projectionOfTensors}.
	By \cref{prop:projections} we have $\norm{\Pi_\mathcal{P}T}_{\cb}\leq \norm{T}_{\cb} \leq 1$. Also note that the polynomial $\Pi_\mathcal{P}T(x)$ belongs to $V_\mathcal{P}$, because $(\Pi_\mathcal{P}T)_\ind{i}=0$ unless $\ind{i}$ contains exactly one index from each set in the partition $\mathcal{P}$. It remains to observe that $(\Pi_\mathcal{P}T)_\ind{i}= T_{\ind{i}}$ for all indices $\ind{i} \in [n]^t$ for which $(T_p)_{\ind i} \neq 0$ and therefore 
	\[
\langle T_p,T\rangle=\langle T_p,\Pi_\mathcal{P}T\rangle.
\]
This shows that $\norm{p}_{\cb,*}\geq t!\norm{T_p}_{\cb,*}$. 
\end{proof}
\subsection{Alternative proof of the main result via semidefinite programming}\label{ap:dualproof}
First of all, we will state \cref{theo:SDP} (which corresponds, after some reformulation, to equation (20) of~\cite{gribling2019semidefinite}), that gives an optimization problem equivalent to the dual of the SDP $\err(p,t)$. Before that, we introduce the following notation. Given $\ind{i}\in [n+1]^{2t}$,  $\alpha(\ind{i})\in \{0,1\}^n$ is defined as $$(\alpha(\ind{i}))_{m}:=\left\{\begin{array}{ll}1 & \mathrm{if}\ m\in[n]\ \mathrm{and}\ m\ \mathrm{occurs\ an\ odd\ number\ of\ times\ in\ }\ind{i},\\ 0 & \mathrm{otherwise}. \end{array}\right. $$

\begin{theorem}[\cite{gribling2019semidefinite}] \label{theo:SDP}
	Let $p:\{-1,1\}^n\to \mathbb{R}$ and $t\in\mathbb{N}$. Then,
	\begin{align}\label{eq:SDP}
		\err(p,t)=\sup&\ (\langle p,r\rangle-w)/\norm{r}_1\\
		\mathrm{s.t.}&\ r:\{-1,1\}^n\to\mathbb{R},\  d\in\mathbb{R}\nonumber\\
		&\ A_s\colon [n+1]\to B_{M(d)}\ \mathrm{for\ all}\ s\in[2t]\nonumber\\
		&\ u,v\in\mathbb{R}^d,\ w=\norm{u}^2=\norm{v}^2 \nonumber\\
		&\ c_{\alpha(\ind{i})}=\langle u,A_1(i_1)\dots A_{2t}(i_{2t})v\rangle\ \mathrm{for\ all}\ \ind{i}\in[n+1]^{2t}\nonumber,
	\end{align}
	where $c_{\alpha}$ are the coefficients of $r$.\footnote{Following \cite{gribling2019semidefinite} one obtains \cref{theo:SDP}, but with the $A_s$ being unitary-valued maps. Every contraction-valued map can be turned into an equivalent unitary-valued map by 
	block-encoding contractions into the top-left corner of unitaries.} 
\end{theorem}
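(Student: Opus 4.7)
The plan is to derive Theorem~\ref{theo:SDP} from the characterization of $\err(f,t)$ in Theorem~\ref{theo:generalexpressionforE} by convex duality, translating the completely bounded constraint on the primal side into the explicit parameterization by $(u,v,A_s)$ on the dual side. Starting from
\[
\err(f,t) = \inf \{\|f - q\|_\infty : h \in \R[x_1,\ldots,x_{n+1}]_{=2t},\ \|h\|_{\cb} \le 1,\ q(x) = h(x,1)\},
\]
I would first use the duality $\|g\|_\infty = \sup_{r \ne 0}\langle g, r\rangle/\|r\|_1$ for $g:\{-1,1\}^n \to \R$ to rewrite $\|f-q\|_\infty$ as a supremum over a dual variable $r$, and then swap $\inf_h$ with $\sup_r$ via von~Neumann's minimax theorem. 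Since $\|h\|_{\cb} \le 1$ and $\|r\|_1 \le 1$ both carve out compact convex sets in finite-dimensional spaces and the objective is bilinear, this exchange is legitimate, producing
\[
\err(f,t) = \sup_{r \ne 0}\frac{\langle f, r\rangle - \sup_{h:\|h\|_{\cb}\le 1}\langle h(\cdot,1), r\rangle}{\|r\|_1}.
\]

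The second step is to identify the inner supremum with the dual completely bounded norm of an explicit tensor. Representing $h$ by a $2t$-tensor $T \in \R^{(n+1)\times\cdots\times(n+1)}$ as in Proposition~\ref{prop:newexpressioncbnorm}, expanding $T(y)$ at $y = (x,1)$ for $x \in \{-1,1\}^n$ and using the Boolean identity $x_j^2 = 1$ gives
\[
h(x,1) = \sum_{\ind i \in [n+1]^{2t}} T_{\ind i}\, x^{\alpha(\ind i)},
\]
since setting the $(n+1)$-th coordinate to $1$ deletes those factors and the parities of the remaining indices collapse the monomial to $x^{\alpha(\ind i)}$. Pairing with $r$ and collecting terms by $\alpha$ yields $\langle h(\cdot,1), r\rangle = \sum_{\ind i} T_{\ind i}\, c^r_{\alpha(\ind i)} = \langle T, R\rangle$, where $R \in \R^{(n+1)^{2t}}$ has entries $R_{\ind i} := c^r_{\alpha(\ind i)}$ and $c^r_\alpha$ are the Fourier coefficients of $r$. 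Taking the supremum over $T$ with $\|T\|_{\cb} \le 1$---equivalent to the supremum over $h$ with $\|h\|_{\cb}\le 1$ by Proposition~\ref{prop:newexpressioncbnorm}---then produces $\|R\|_{\cb,*}$.

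The third step is to substitute the Minkowski-norm representation of $\|R\|_{\cb,*}$ from Proposition~\ref{prop:Tcb*}:
\[
\|R\|_{\cb,*} = \inf\{w > 0 : R_{\ind i} = \langle u, A_1(i_1)\cdots A_{2t}(i_{2t})\, v\rangle,\ \|u\|^2 = \|v\|^2 = w\},
\]
where using $2t$ distinct contraction-valued maps $A_s$ rather than a single $A$ leaves the infimum unchanged by Proposition~\ref{prop:polarizationTensors}. Plugging this back into the dualized expression for $\err(f,t)$ and pulling $\inf_w$ through the outer supremum, the constraint $R_{\ind i} = c^r_{\alpha(\ind i)}$ becomes precisely $c_{\alpha(\ind i)} = \langle u, A_1(i_1)\cdots A_{2t}(i_{2t}) v\rangle$, while the objective collapses to $(\langle f, r\rangle - w)/\|r\|_1$, matching~\eqref{eq:SDP}.

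The main technical obstacle is the Fourier-coefficient bookkeeping in the second step: one must verify that the map $\ind i \mapsto \alpha(\ind i)$ correctly aggregates all the contributions of the tensor $T$ to each multilinear monomial of the restriction of $h(\cdot,1)$ to $\{-1,1\}^n$, which is precisely what makes the identification $\langle h(\cdot,1), r\rangle = \langle T, R\rangle$ well-defined (in particular, independent of the choice of tensor representing $h$). A secondary subtlety is verifying that the cb-norm on polynomials and on tensors interact as expected under the substitution $y=(x,1)$, which relies on Proposition~\ref{prop:newexpressioncbnorm} to pass freely between $\|h\|_{\cb}$ and $\inf_T \|T\|_{\cb}$ over tensors representing $h$.
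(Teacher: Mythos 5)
Your argument is correct, but it takes a genuinely different route from the paper: the paper does not prove \cref{theo:SDP} at all, it imports it from \cite{gribling2019semidefinite} (it corresponds, after reformulation, to equation (20) there, stated with unitary-valued maps; the footnote converts unitaries to contractions) and uses it only as the external ingredient for the second proof of \cref{theo:lowerSDP}. You instead derive \eqref{eq:SDP} from \cref{theo:generalexpressionforE} by the same machinery the paper deploys for its \emph{first} proof of \cref{theo:lowerSDP}: dualize $\norm{f-q}_\infty$ against $\norm{r}_1\leq 1$, swap the infimum and supremum by von Neumann's minimax theorem (the restriction map $h\mapsto h(\cdot,1)$ sends the compact convex cb-unit ball to a compact convex set of feasible $q$, so the swap is legitimate), identify $\sup_{\norm{h}_{\cb}\leq 1}\langle h(\cdot,1),r\rangle$ with $\norm{R}_{\cb,*}$ for the tensor $R_{\ind i}=c_{\alpha(\ind i)}$ using \cref{prop:newexpressioncbnorm} together with the observation that the monomial indexed by $\ind i$ collapses to $x^{\alpha(\ind i)}$ at $y=(x,1)$ on the hypercube (and that this pairing is independent of the tensor representing $h$), and finally unfold $\norm{R}_{\cb,*}$ via \cref{prop:Tcb*}. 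In effect you show that \cref{theo:SDP} is itself a consequence of the characterization of \cite{QQA=CBF}, which makes the paper self-contained and reveals that the ``alternative proof'' of \cref{theo:lowerSDP} in Section 6 rests on the same duality as the first one; the paper's citation route is shorter and keeps the link to the exact SDP formulation of quantum query complexity. A pleasant side effect of your derivation is that contraction-valued maps come out directly, so the footnote's block-encoding of contractions into unitaries is not needed.

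Two small points to tighten. First, the passage from $2t$ distinct maps $A_1,\dots,A_{2t}$ to a single map concerns the set $K(n+1,2t)$ appearing in \cref{prop:Tcb*}, not literally \cref{prop:polarizationTensors} (which is a statement about the primal norm $\norm{T}_{\cb}$); the correct justification is the same superdiagonal block construction $A(i)=\sum_{s}e_se_{s+1}^{\mathsf T}\otimes A_s(i)$ with $u'=e_1\otimes u$, $v'=e_{2t+1}\otimes v$, exactly as used in \cref{lem:SDPDualforblockmulti}. Second, carry out the minimax with the compact constraint $\norm{r}_1\leq 1$ and only afterwards pass to the ratio form using positive homogeneity of $r\mapsto \langle f,r\rangle-\norm{R(r)}_{\cb,*}$; the exclusion of $r=0$ in the ratio is then the same (harmless, degenerate-case) convention that is already implicit in the statement of \eqref{eq:SDP}.
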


Second, we show that in the case of $p$ belonging to $W_\mathcal{P}$, we can restrict $r$ to belong to $W_\mathcal{P}$. As before, we also show that a single contraction-valued map $A$ suffices.\footnote{In fact, all of the contraction-valued maps $A_s$ can be taken to be the same regardless of whether $p$ belongs to $W_\mathcal{P}$ or not, similarly to what is done in Proposition \ref{prop:polarizationTensors}.}
 
\begin{lemma}\label{lem:SDPDualforblockmulti}
	Let $\mathcal{P}$ be a partition of $[n]$ in $2t$ subsets and let $p\in W_\mathcal{P}$. Then,
	\begin{align}\label{eq:SDP2}
		\err(p,t)=\sup&\ (\langle p,r\rangle-w)/\norm{r}_1\\
		\mathrm{s.t.}&\ \nonumber r:\{-1,1\}^n\to\mathbb{R},\, r\in W_\mathcal{P},\, d\in\mathbb{R}\\
		&\ A\colon [n+1]\to B_{M(d)},\, \nonumber\\
		&\ u,v\in\mathbb{R}^d,\, w=\norm{u}^2=\norm{v}^2\nonumber \\
		&\ c_{\alpha(\ind{i})}=\langle u,A(\ind{i})v\rangle\ \mathrm{for\ all}\ \ind{i}\in[n+1]^{2t}\nonumber\ ,
	\end{align}
	where $c_{\alpha}$ are the coefficients of $r$. 
\end{lemma}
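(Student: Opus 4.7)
Since the feasible region of \eqref{eq:SDP2} is a restriction of that of \eqref{eq:SDP}, the inequality $\err(p,t)\geq$ sup of \eqref{eq:SDP2} is immediate. I will prove the reverse inequality by converting an arbitrary feasible solution $(r,(A_s)_{s\in[2t]},u,v,w)$ of \eqref{eq:SDP} (with $\langle p,r\rangle-w\geq 0$, which is all that matters because the supremum is always nonnegative) into a feasible solution of \eqref{eq:SDP2} whose objective is at least as large. I do this in two steps: first collapse the $2t$ maps into a single map, then project $r$ onto $W_\mathcal{P}$.

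\smallskip
\noindent\textbf{Step 1: Single map.} Mirroring the proof of \cref{prop:polarizationTensors}, define the contraction $A\colon [n+1]\to B_{M((2t+1)d)}$ by $A(i)=\sum_{s=1}^{2t}e_s e_{s+1}^{\mathsf T}\otimes A_s(i)$ and set $u'=e_1\otimes u$, $v'=e_{2t+1}\otimes v$. Telescoping gives $A(\ind i)=e_1e_{2t+1}^{\mathsf T}\otimes A_1(i_1)\cdots A_{2t}(i_{2t})$, so $\langle u',A(\ind i)v'\rangle=\langle u,A_1(i_1)\cdots A_{2t}(i_{2t})v\rangle=c_{\alpha(\ind i)}$, while $\|u'\|^2=\|v'\|^2=w$. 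Thus the constraint and the weight are preserved using a single map.

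\smallskip
\noindent\textbf{Step 2: Projection onto $W_\mathcal P$.} To project $r$ onto $W_\mathcal{P}$ while keeping the SDP-style constraint, I implement the random-sign averaging of \cref{prop:projections} in a block-diagonal fashion. For each $z=(z_I)_{I\in\mathcal{P}}\in\{-1,1\}^{\mathcal{P}}$, define the contraction $\tilde A^z(i)=z_I A(i)$ for $i\in I\in\mathcal{P}$, and $\tilde A^z(n+1)=A(n+1)$. Set
\[
\hat A(i)=\bigoplus_z\tilde A^z(i),\qquad \hat u=\bigoplus_z 2^{-t}u',\qquad \hat v=\bigoplus_z 2^{-t}\Big(\prod_{I\in\mathcal{P}}z_I\Big)v'.
\]
Since $|\mathcal P|=2t$, one checks $\|\hat u\|^2=\|\hat v\|^2=w$. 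Because scalar factors commute with the $A(i_s)$ and the blocks are independent, a short calculation yields
\[
\langle\hat u,\hat A(\ind i)\hat v\rangle=2^{-2t}\,c_{\alpha(\ind i)}\sum_z\prod_{I\in\mathcal{P}}z_I^{1+|\{s\,:\,i_s\in I\}|}.
\]
The inner sum factors over $I\in\mathcal P$, evaluating to $2^{2t}$ when every exponent $1+|\{s:i_s\in I\}|$ is even---equivalently, when $\alpha(\ind i)$ satisfies the odd-sum conditions defining $W_\mathcal{P}$---and to $0$ otherwise. Hence $\langle\hat u,\hat A(\ind i)\hat v\rangle$ is exactly the coefficient of $x^{\alpha(\ind i)}$ in $\Pi_\mathcal{P}r$, so $(\Pi_\mathcal{P}r,\hat A,\hat u,\hat v,w)$ is feasible for \eqref{eq:SDP2}. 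Finally, $p\in W_\mathcal{P}$ gives $\langle p,\Pi_\mathcal{P}r\rangle=\langle p,r\rangle$, and $\|\Pi_\mathcal{P}r\|_1\leq\|r\|_1$ by \cref{cor:projections}; together with the nonnegativity of the numerator, this shows the objective cannot decrease.

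\smallskip
\noindent\textbf{Anticipated obstacle.} The main bookkeeping step is the sign arithmetic in Step 2, i.e.\ verifying that the parity condition $1+|\{s:i_s\in I\}|\equiv 0\pmod 2$ picks out exactly those $\ind i$ with $\alpha(\ind i)$ in the support defining $W_\mathcal{P}$. Conceptually, the proof is a clean concatenation of two tools already developed in the excerpt: the polarization trick of \cref{prop:polarizationTensors} and the random-sign averaging of \cref{prop:projections}.
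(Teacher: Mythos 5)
Your proof is correct and takes essentially the same route as the paper's: both directions are handled identically, and your reverse-inequality construction is the paper's construction factored into two steps — the chaining trick of \cref{prop:polarizationTensors} to merge the maps $A_s$ into one, followed by a block-diagonal implementation of the random-sign averaging of \cref{prop:projections} to replace $r$ by $\Pi_\mathcal{P}r$ — whereas the paper performs both modifications in a single definition of $\tilde A$ (and sets $\tilde A(n+1)=0$, while you keep $A(n+1)$ and let the parity argument kill those terms; both work). Your explicit remark that one may assume $\langle p,r\rangle-w\geq 0$ in the final objective comparison is a small point the paper leaves implicit.
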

\begin{proof}
	Let $\mathcal{E}^*(p,t)$ be the expression in the right-hand side of \cref{eq:SDP2}. By inclusion of the feasible region,  $\err (p,t)\geq \err^*(p,t)$. To prove the other inequality, consider a feasible instance $(u,v,w,A_s,r)$ for the SDP \eqref{eq:SDP}. Consider the contraction-valued map $\tilde{A}(i):=\sum_{s\in[2t]}e_se_{s+1}^\mathsf{T}\otimes \hat{A}_s(i)$ for $i\in [n]$, where $\hat{A}_s(i)$ is a $d2^{2t}\times d2^{2t}$ matrix
	defined by $$\hat{A}_s(i)=\bigoplus_{z\in \{-1,1\}^{2t}} (A_s\cdot z)(i).$$ We also define $\tilde{A}(n+1):=0$ and the vectors $\tilde{u}=e_1\otimes \hat{u}$ and $\tilde{v}=e_{2t+1}\otimes \hat{v}$, where $\hat{u}$ is the $2^td$-dimensional vector defined as the (normalized) direct sum of $2^t$ copies of $u$, i.e., $$\hat{u}=\frac{1}{\sqrt{2^{2t}}}\bigoplus_{z\in\{-1,1\}^{2t}}u,$$ and the same for $\hat{v}$, but with an appropriate sign in each of the copies $$\hat{v}=\frac{1}{\sqrt{2^{2t}}}\bigoplus_{z\in\{-1,1\}^{2t}}v\prod_{I \in \mathcal P}z_I.$$  This way,  $(\tilde{u},\tilde{v},w,\tilde{A},\Pi_\mathcal{P}r)$ is a feasible instance of $\mathcal{E}^*(p,t).$ Indeed, if $\ind{i}$ takes the value $n+1$ at least once or has any repeated indices, then $\alpha(\ind{i})_1+\dots+\alpha(\ind{i})_n<2t$, so $(\Pi_\mathcal{P}r)_{\alpha(\ind{i})}=0$ because $\Pi_\mathcal{P}r\in W_\mathcal{P}$, and $\langle\tilde{u},\tilde{A}(\ind{i})\tilde{v}\rangle=0$ by construction. If~$\ind{i}\in [n]^{2t}$ and has no repeated indices, we get that 
	\[
	\langle \tilde{u},\tilde{A}(\ind{i})\tilde{v}\rangle=\frac{1}{2^{2t}}\sum_{z\in\{-1,1\}^{2t}}\langle u,A_1(i_1)\dots A_{2t}(i_{2t})v\rangle \prod_{I \in \mathcal P}z_I^{1+\sum_{j \in I} \alpha(\ind i)_j}. 
	\]
	Now, reasoning as in \cref{prop:projections}, we get that $$\langle \tilde{u},\tilde{A}(\ind{i})\tilde{v}\rangle=\left\{\begin{array}{ll}
	\langle u,A_1(i_1)\dots A_{2t}(i_{2t})v\rangle & \mathrm{if}\ \ind{i}\ \text{takes  one  value  in  each }I \in \mathcal P,\\
	0 & \mathrm{otherwise},
\end{array}\right.$$
so putting everything together we get that $$\tilde{c}_{\alpha(\ind{i})}=\langle \tilde{u},\tilde{A}(\ind{i})\tilde{v}\rangle, $$ where $\tilde{c}_{\alpha}$ are the coefficients of $\Pi_\mathcal{P}r$. Since $\tilde A$ is contraction-valued and both 
\begin{equation*}
\norm{\tilde u}^2=\frac{1}{2^{2t}}\sum_{z\in\{-1,1\}^{2t}}\norm{u}^2=\norm{u}^2=w    
\end{equation*}
and $\norm{\tilde{v}}^2=w$, we conclude that $(\tilde{u},\tilde{v},w,\tilde{A},\Pi_\mathcal{P}r)$ is a feasible instance of \cref{eq:SDP2}.

	Finally, the value of $(\tilde{u},\tilde{v},w,\tilde{A},\Pi_\mathcal{P}r)$ is at least as large as the one of 
	$(u,v,w,A,r)$: 
	$$\frac{\langle p,\Pi_\mathcal{P}r\rangle-w}{\norm{\Pi_\mathcal{P}r}_1}=\frac{\langle p,r\rangle-w}{\norm{\Pi_\mathcal{P}r}_1}\geq \frac{\langle p,r\rangle-w}{\norm{r}_1},$$
	where in the equality we have used that that $p$  belongs to $W_\mathcal{P}$ and in the inequality we have used \cref{cor:projections}.
\end{proof}

Now, we are ready to prove \cref{theo:lowerSDP}, again.

\begin{proof}[ of \cref{theo:lowerSDP}]
	First note that given a feasible instance $(u,v,w,A,r)$ for \cref{eq:SDP2} we clearly have that $r_{=2t}\in V_{\mathcal P}$. We show that also $w\geq \norm{r_{=2t}}_{\cb,*}$.  By \cref{prop:Tcb*,prop:dualitycb}, this requires us to show that $t!$ times the unique symmetric $2t$-tensor $T_{r_{=2t}}$ associated to $r_{=2t}$ belongs to $w K(n,t)$. To do so, we show that $t! T_{r_{=2t}} = (\langle u, A(\ind i) v\rangle)_{\ind i \in [n]^{2t}}$.  
	Let $\ind{i}\in [n]^{2t}$. If $\ind{i}$ has repeated indices then $(T_{r_{=2t}})_{\ind{i}}=0$ because $r_{=2t}$ is multilinear, and also $\langle u,A(\ind{i})v\rangle=0$, because $r$ is a feasible instance of \cref{eq:SDP2}. If $\ind{i}$ does not have repeated indices, then $t!(T_{r_{=2t}})_{\ind{i}}=c_{\alpha(\ind{i})}$, and also $\langle u,A(\ind{i})v\rangle=c_{\alpha(\ind{i})}$ because $r$ is a feasible solution for Eq. \eqref{eq:SDP2}. 
	
	On the other hand, given $r\in W_\mathcal{P}$ there is an instance $(u,v,w,A,r)$ with $w=\norm{r_{=2t}}_{\cb,*}$. Indeed, by \cref{prop:dualitycb} there is a map $A\colon [n]\to B_{M(d)}$ and vectors $u,v$ whose norm squared is $\norm{r_{=2t}}_{\cb,*}$ such that $c_{\alpha}(r_{=2t})=\langle u,A(\ind{i})v\rangle$ for every $\ind{i}\in\mathcal{I}_\alpha$ and every $\alpha\in\mathbb{Z}_{\geq 0}^n$. Note that in order to have a feasible instance for \cref{eq:SDP2} we need to satisfy its last condition, and with these contractions we can only satisfy it for $\alpha$ such that $\alpha_1+\dots+\alpha_n=2t$. To satisfy it for every $\alpha$ with $\alpha_1+\dots+\alpha_n\leq 2t$, we just have to change the contractions and the  vectors by  $\hat{A}(i):=\sum_{s\in[2t]}e_se_{s+1}^\mathsf{T}\otimes A_s(i)$ and   $\hat{u}:=e_1\otimes u$ and $\hat{v}:=e_{t+1}\otimes v$ and define the extra contraction as $\hat{A}(n+1)=0$. This, way $(\hat{u},\hat{v},\norm{r_{=2t}}_{\cb,*},\hat{A},r)$ is a feasible instance. To sum up, so far we have proved that 
	\begin{align*}
		\err(p,t)=\sup&\ (\langle p,r\rangle-\norm{r_{=2t}}_{\cb,*})/\norm{r}_1\\
		\mathrm{s.t.}&\ r:\{-1,1\}^n\to\mathbb{R},\ r\in W_\mathcal{P}.
	\end{align*}
	We finally reformulate the above in terms of $r_{=2t}$ using the following two observations. Since $p \in V_{\mathcal P}$ we have $\langle p,r\rangle=\langle p,r_{=2t}\rangle$. Moreover, by \cref{prop:dualityinfty}, we have $\norm{r_{=2t}}_{\infty,*} = \inf\{ \norm{\tilde r}_1 \st \tilde r:\{-1,1\}^n\to\mathbb{R},\, \tilde r\in W_\mathcal{P},\, \tilde r_{=2t}=r_{=2t}\}$. Hence, 
		\begin{align*}
		\err(p,t)=\sup&\ (\langle p,r_{=2t} \rangle-\norm{r_{=2t}}_{\cb,*})/\norm{r_{=2t}}_{\infty,*}\\
		\mathrm{s.t.}&\ r_{=2t}\in V_\mathcal{P},
	\end{align*}
	which concludes the proof.
\end{proof}

\subsection*{Acknowledgments}

We thank anonymous referees for their helpful feedback. 

\bibliographystyle{alphaurl}
\bibliography{Bibliography}

\newcommand{\etalchar}[1]{$^{#1}$}
\begin{thebibliography}{MMFPSS22}

\bibitem[AAI{\etalchar{+}}16]{Aaronson2015PolynomialsQQ}
Scott Aaronson, Andris Ambainis, J\=anis Iraids, Martins Kokainis, and Juris
  Smotrovs.
\newblock Polynomials, quantum query complexity, and {G}rothendieck's
  inequality.
\newblock In {\em 31st Conference on Computational Complexity, {CCC} 2016},
  pages 25:1--25:19, 2016.
\newblock arXiv:1511.08682.
\newblock URL: \url{https://doi.org/10.4230/LIPIcs.CCC.2016.25}.

\bibitem[Aar21]{Aaronson:2021}
Scott Aaronson.
\newblock Open problems related to quantum query complexity.
\newblock {\em ACM Transactions on Quantum Computing}, 2(4), 2021.
\newblock URL: \url{https://doi.org/10.1145/3488559}.

\bibitem[AB23]{ambainis2023exponentialccc}
Andris Ambainis and Aleksandrs Belovs.
\newblock An exponential separation between quantum query complexity and the
  polynomial degree.
\newblock In {\em Proceedings of the Conference on Proceedings of the 38th
  Computational Complexity Conference}, CCC '23, Dagstuhl, DEU, 2023. Schloss
  Dagstuhl--Leibniz-Zentrum fuer Informatik.
\newblock URL: \url{https://doi.org/10.4230/LIPIcs.CCC.2023.24}.

\bibitem[ABDK16]{AaronsonBDK:2016}
Scott Aaronson, Shalev Ben-David, and Robin Kothari.
\newblock Separations in query complexity using cheat sheets.
\newblock STOC '16, New York, NY, USA, 2016. Association for Computing
  Machinery.
\newblock URL: \url{https://doi.org/10.1145/2897518.2897644}.

\bibitem[ABP19]{QQA=CBF}
Srinivasan Arunachalam, Jop Bri{\"{e}}t, and Carlos Palazuelos.
\newblock Quantum query algorithms are completely bounded forms.
\newblock {\em SIAM J.\ Comput}, 48(3):903--925, 2019.
\newblock Preliminary version in ITCS'18.
\newblock URL: \url{https://doi.org/10.1137/18M117563X}.

\bibitem[Amb06]{Ambainis:2006}
A.~Ambainis.
\newblock Polynomial degree vs.\ quantum query complexity.
\newblock {\em J. Comput. System Sci.}, 72(2):220--238, 2006.
\newblock Earlier version in FOCS'03. quant-ph/0305028.
\newblock URL: \url{https://doi.org/10.1016/j.jcss.2005.06.006}.

\bibitem[Amb07]{Ambainis:2007}
A.~Ambainis.
\newblock Quantum walk algorithm for element distinctness.
\newblock {\em SIAM Journal on Computing}, 37(1):210--239, 2007.
\newblock Earlier version in FOCS'04. arXiv:quant-ph/0311001.
\newblock URL: \url{https://doi.org/10.1137/S0097539705447311}.

\bibitem[Amb18]{Ambainis:2018}
Andris Ambainis.
\newblock Understanding quantum algorithms via query complexity.
\newblock In {\em Proceedings of the International Congress of Mathematicians
  (ICM 2018)}, pages 3265--3285, 2018.
\newblock \href {https://doi.org/10.1142/9789813272880_0181}
  {\path{doi:10.1142/9789813272880_0181}}.

\bibitem[BBB{\etalchar{+}}19]{BBBLL:2019}
Tom Bannink, Jop Bri{\"e}t, Harry Buhrman, Farrokh Labib, and Troy Lee.
\newblock {Bounding Quantum-Classical Separations for Classes of Nonlocal
  Games}.
\newblock In {\em 36th International Symposium on Theoretical Aspects of
  Computer Science (STACS 2019)}, volume 126 of {\em Leibniz International
  Proceedings in Informatics (LIPIcs)}, pages 12:1--12:11, Dagstuhl, Germany,
  2019. Schloss Dagstuhl--Leibniz-Zentrum fuer Informatik.
\newblock Available at arXiv: 1811.11068.
\newblock URL: \url{http://doi.org/10.4230/LIPIcs.STACS.2019.12}.

\bibitem[BBC{\etalchar{+}}01]{polynomialmethod}
Robert Beals, Harry Buhrman, Richard Cleve, Michele Mosca, and Ronald de~Wolf.
\newblock Quantum lower bounds by polynomials.
\newblock {\em J. ACM}, 48(4):778–797, 2001.
\newblock URL: \url{https://doi.org/10.1145/502090.502097}.

\bibitem[BE22]{briet2022converses}
Jop Bri\"{e}t and Francisco Escudero{ }Guti\'{e}rrez.
\newblock {On Converses to the Polynomial Method}.
\newblock In {\em 17th Conference on the Theory of Quantum Computation,
  Communication and Cryptography (TQC 2022)}, volume 232 of {\em Leibniz
  International Proceedings in Informatics (LIPIcs)}, pages 6:1--6:10. Schloss
  Dagstuhl -- Leibniz-Zentrum f{\"u}r Informatik, 2022.
\newblock URL: \url{http://doi.org/10.4230/LIPIcs.TQC.2022.6}.

\bibitem[BH31]{bohnenblust1931absolute}
Henri~Fr{\'e}d{\'e}ric Bohnenblust and Einar Hille.
\newblock On the absolute convergence of dirichlet series.
\newblock {\em Annals of Mathematics}, pages 600--622, 1931.
\newblock URL: \url{https://doi.org/10.2307/1968255}.

\bibitem[BKT20]{BunKhotariThaler:2020}
Mark Bun, Robin Kothari, and Justin Thaler.
\newblock The polynomial method strikes back: Tight quantum query bounds via
  dual polynomials.
\newblock {\em Theory of Computing}, 16(10):1--71, 2020.
\newblock URL: \url{https://doi.org/10.4086/toc.2020.v016a010}.

\bibitem[BLM13]{boucheron:concentration}
S.~Boucheron, G.~Lugosi, and P.~Massart.
\newblock {\em Concentration inequalities: A nonasymptotic theory of
  independence}.
\newblock Oxford university press, 2013.
\newblock URL: \url{https://doi.org/10.1093/acprof:oso/9780199535255.001.0001}.

\bibitem[BMMN13]{Braverman:2013}
M.~Braverman, K.~Makarychev, Y.~Makarychev, and A.~Naor.
\newblock The {G}rothendieck constant is strictly smaller than {K}rivine's
  bound.
\newblock {\em Forum Math. Pi}, 1:453--462, 2013.
\newblock Preliminary version in FOCS'11. arXiv:1103.6161.
\newblock URL: \url{https://doi.org/10.1017/fmp.2013.4}.

\bibitem[BP19]{briet2018failure}
Jop Bri\"{e}t and Carlos Palazuelos.
\newblock Failure of the trilinear operator space {G}rothendieck inequality.
\newblock {\em Discrete Analysis}, 2019.
\newblock Paper No.~8.
\newblock URL: \url{https://doi.org/10.19086/da.8805}.

\bibitem[BSdW22]{Bansal:2022}
Nikhil Bansal, Makrand Sinha, and Ronald de~Wolf.
\newblock {Influence in Completely Bounded Block-Multilinear Forms and
  Classical Simulation of Quantum Algorithms}.
\newblock In {\em 37th Computational Complexity Conference (CCC 2022)}, volume
  234, pages 28:1--28:21, 2022.
\newblock URL: \url{http://doi.org/10.4230/LIPIcs.CCC.2022.28}.

\bibitem[BV04]{boyd_vandenberghe_2004}
Stephen Boyd and Lieven Vandenberghe.
\newblock {\em Convex Optimization}.
\newblock Cambridge University Press, 2004.
\newblock URL: \url{http://doi.org/10.1017/CBO9780511804441}.

\bibitem[Dav84]{Davie:1984}
A.~Davie.
\newblock Lower bound for {$K_G$}.
\newblock Unpublished, 1984.

\bibitem[Esc24]{gutierrez2023influences}
Francisco Escudero{\ }Guti{\'e}rrez.
\newblock Influences of fourier completely bounded polynomials and classical
  simulation of quantum algorithms.
\newblock {\em Chicago Journal of Theoretical Computer Science}, 2024.
\newblock URL: \url{https://doi.org/10.48550/arXiv.2304.06713}.

\bibitem[GL19]{gribling2019semidefinite}
Sander Gribling and Monique Laurent.
\newblock Semidefinite programming formulations for the completely bounded norm
  of a tensor.
\newblock 2019.
\newblock URL: \url{https://doi.org/10.48550/arXiv.1901.04921}.

\bibitem[Gre07]{Green:2007}
Ben Green.
\newblock Montr\'{e}al notes on quadratic {F}ourier analysis.
\newblock In {\em Additive combinatorics}, volume~43 of {\em CRM Proc. Lecture
  Notes}, pages 69--102. Amer. Math. Soc., Providence, RI, 2007.
\newblock URL: \url{https://doi.org/10.1090/crmp/043/06}.

\bibitem[Gro53]{grothendieck1953resume}
Alexandre Grothendieck.
\newblock {\em R{\'e}sum{\'e} de la th{\'e}orie m{\'e}trique des produits
  tensoriels topologiques}.
\newblock Soc. de Matem{\'a}tica de S{\~a}o Paulo, 1953.
\newblock URL: \url{http://doi.org/10.5802/aif.46}.

\bibitem[Gro96]{Grover:1996}
L.~K. Grover.
\newblock A fast quantum mechanical algorithm for database search.
\newblock In {\em Proceedings of the twenty-eighth annual ACM symposium on
  Theory of computing}, pages 212--219. ACM, 1996.
\newblock URL: \url{https://doi.org/10.1145/237814.237866}.

\bibitem[Har72]{harris1972bounds}
Lawrence~A Harris.
\newblock Bounds on the derivatives of holomorphic functions of vectors.
\newblock In {\em Proc. Colloq. Analysis, Rio de Janeiro}, volume 145, page
  163, 1972.

\bibitem[HW{\etalchar{+}}79]{hardy1979introduction}
Godfrey~Harold Hardy, Edward~Maitland Wright, et~al.
\newblock {\em An introduction to the theory of numbers}.
\newblock Oxford university press, 1979.
\newblock URL: \url{https://doi.org/10.1126/science.90.2329.158.b}.

\bibitem[KM13]{kane2013prg}
Daniel~M Kane and Raghu Meka.
\newblock A {PRG} for {L}ipschitz functions of polynomials with applications to
  sparsest cut.
\newblock In {\em Proceedings of the forty-fifth annual ACM symposium on Theory
  of computing}, pages 1--10, 2013.
\newblock URL: \url{https://doi.org/10.1145/2488608.2488610}.

\bibitem[KN07]{khot2007linear}
Subhash Khot and Assaf Naor.
\newblock Linear equations modulo 2 and the l1 diameter of convex bodies.
\newblock In {\em 48th Annual IEEE Symposium on Foundations of Computer Science
  (FOCS'07)}, pages 318--328. IEEE, 2007.
\newblock URL: \url{https://doi.org/10.1109/FOCS.2007.20}.

\bibitem[Lov10]{lovett2010elementary}
Shachar Lovett.
\newblock An elementary proof of anti-concentration of polynomials in gaussian
  variables.
\newblock In {\em Electron. Colloquium Comput. Complex.}, volume~17, page 182,
  2010.

\bibitem[MMFPSS22]{moslehian2022similarities}
Mohammad~Sal Moslehian, GA~Mu{\~n}oz-Fern{\'a}ndez, AM~Peralta, and
  JB~Seoane-Sep{\'u}lveda.
\newblock Similarities and differences between real and complex banach spaces:
  an overview and recent developments.
\newblock {\em Revista de la Real Academia de Ciencias Exactas, F{\'\i}sicas y
  Naturales. Serie A. Matem{\'a}ticas}, 116(2):1--80, 2022.
\newblock URL: \url{https://doi.org/10.1007/s13398-022-01222-8}.

\bibitem[Nik54]{Nikaid1954OnVN}
Hukukane Nikaid{\^o}.
\newblock {On von Neumann’s minimax theorem}.
\newblock {\em Pacific Journal of Mathematics}, 4:65--72, 1954.

\bibitem[O'D14]{o2014analysis}
Ryan O'Donnell.
\newblock {\em Analysis of boolean functions}.
\newblock Cambridge University Press, 2014.
\newblock URL: \url{https://doi.org/10.1017/CBO9781139814782}.

\bibitem[OZ15]{o2015polynomial}
Ryan O'Donnell and Yu~Zhao.
\newblock Polynomial bounds for decoupling, with applications.
\newblock {\em arXiv preprint arXiv:1512.01603}, 2015.
\newblock URL: \url{https://doi.org/10.4230/LIPIcs.CCC.2016.24}.

\bibitem[Pau03]{paulsenoperatoralgebras}
Vern Paulsen.
\newblock {\em Completely Bounded Maps and Operator Algebras}.
\newblock 02 2003.
\newblock URL: \url{https://doi.org/10.1017/CBO9780511546631}.

\bibitem[Ree91]{Reeds:1991}
J.~Reeds.
\newblock A new lower bound on the real {Grothendieck} constant.
\newblock Manuscript (\url{http://www.dtc.umn.edu/~reedsj/bound2.dvi}), 1991.

\bibitem[Sho97]{Shor:1997}
P.~W. Shor.
\newblock Polynomial-time algorithms for prime factorization and discrete
  logarithms on a quantum computer.
\newblock {\em {SIAM} Journal of Computing}, 26(5):1484--1509, 1997.
\newblock Earlier version in FOCS'94.
\newblock URL: \url{https://doi.org/10.1137/S0097539795293172}.

\bibitem[Tao12]{tao2012topics}
Terence Tao.
\newblock {\em Topics in Random Matrix Theory}.
\newblock Graduate studies in mathematics. American Mathematical Society, 2012.

\bibitem[Tsi80]{Tsirelson}
B.~S. Tsirelson.
\newblock Quantum generalizations of {B}ell's inequality.
\newblock {\em Letters in Mathematical Physics}, 1980.
\newblock URL: \url{https://doi.org/10.1007/BF00417500}.

\bibitem[TT23]{tao2021quantitativej}
Terence Tao and Joni Ter{\"a}v{\"a}inen.
\newblock Quantitative bounds for {G}owers uniformity of the {M}\"{o}bius and
  von {M}angoldt functions.
\newblock {\em Journal of the European Mathematical Society}, 2023.
\newblock \href {https://doi.org/10.4171/jems/1404}
  {\path{doi:10.4171/jems/1404}}.

\bibitem[Var74]{Varopoulos:1974}
N.~Th. Varopoulos.
\newblock On an inequality of von {N}eumann and an application of the metric
  theory of tensor products to operators theory.
\newblock {\em J. Functional Analysis}, 16:83--100, 1974.
\newblock URL: \url{http://doi.org/10.1016/0022-1236(74)90071-8}.

\end{thebibliography}

\end{document}